\newtheorem{theorem}{Theorem}[section]
\newtheorem{lemma}[theorem]{Lemma}
\newtheorem{prop}[theorem]{Proposition}
\theoremstyle{definition}
\newtheorem{definition}[theorem]{Definition}
\newtheorem{remark}[theorem]{Remark}
\newtheorem{observation}[theorem]{Observation}
\pgfplotsset{compat=1.7}
\newcommand\newtag[2]{#1\def\@currentlabel{#1}\label{#2}}
\titleformat*{\section}{\Large\bfseries}
\titleformat*{\subsection}{\large\bfseries}
\titleformat*{\subsubsection}{\large\bfseries}
\newenvironment{customthm}[1]
  {\innercustomthm}
  {\endinnercustomthm}
\definecolor{ForestGreen}{rgb}{.13,.54,.13}
\definecolor{Pink}{rgb}{.8,.4,.5}
\title{Efficient Nearly-Fair Division with Capacity Constraints}
\author{Hila Shoshan}
\affiliation{
  \institution{Ariel University}
  \city{Ariel}
  \country{Israel}}
\email{hilashoshan0605@gmail.com}
\author{Noam Hazon}
\affiliation{
  \institution{Ariel University}
  \city{Ariel}
  \country{Israel}}
\email{noamh@ariel.ac.il}
\author{Erel Segal-Halevi}
\affiliation{
  \institution{Ariel University}
  \city{Ariel}
  \country{Israel}}
\email{erelsgl@gmail.com}
\begin{abstract}
We consider the problem of fairly and efficiently allocating  indivisible items (goods or bads) under capacity constraints.
In this setting, we are given a set of categorized items.
Each category has a capacity constraint (the same for all agents), that is an upper bound on the number of items an agent can receive from each category.
Our main result is a polynomial-time algorithm that solves the problem for two agents with additive utilities over the items.
When each category contains items that are all goods (positively evaluated) or all chores (negatively evaluated) for each of the agents,
our algorithm finds a feasible allocation of the items, which is both Pareto-optimal and envy-free up to one item.
In the general case, when each item can be a good or a chore arbitrarily, our algorithm finds an allocation that is Pareto-optimal and envy-free up to one good and one chore. 
\end{abstract}
\keywords{Fair division, Indivisible items, Mixed manna, Capacity constraints}
\newcommand{\BibTeX}{\rm B\kern-.05em{\sc i\kern-.025em b}\kern-.08em\TeX}
\begin{document}

%%% The following commands remove the headers in your paper. For final 
%%% papers, these will be inserted during the pagination process.

\pagestyle{fancy}
\fancyhead{}

%%% The next command prints the information defined in the preamble.

\maketitle 

% \hi{TODO: add an explanation about the difference between EF2 and our EF[1,1]. check figure instruction (description).}

%%%%%%%%%%%%%%%%%%%%%%%%%%%%%%%%%%%%%%%%%%%%%%%%%%%%%%%%%%%%%%%%%%%%%%%%

\section{Introduction}
The problem of how to fairly divide a set of items among agents with different preferences has been investigated by many mathematicians, economists, political scientists and computer scientists.
Most of the earlier work focused on how to fairly divide \emph{goods}, i.e., items with non-negative utility. 
In recent years, several works have considered the division of \emph{chores}, i.e., items with non-positive utility, and a few works also considered the division of a mixture of goods and chores (for example, \citet{aziz2022fair} and \citet{Brczi2020}).
Indeed, items may be  considered as goods for one agent and as chores for another agent. For example, consider a project that has to be completed by a team of students. It consists of several tasks that should be divided among the students, such as: programming tasks, user-interface tasks and algorithm development tasks. One student may evaluate the programming tasks as items with negative utilities and the UI and algorithmic tasks as items with positive utilities, while another student may evaluate them the other way around. 

% As common in many other fields, real-life situations give rise to important insights that should be considered by fair-division algorithms. In particular,
Often, there is a constraint by which the items are partitioned into \emph{categories}, and each category has an associated \emph{capacity}, which defines the maximum number of items in this category that may be assigned to each agent. 
Considering again the student project example, 
the mentor of the project may want all students to be involved in all aspects of the project. Therefore, the mentor may partition the project tasks into three categories: programming, UI, and algorithms, setting a capacity for each category.
For example, if the team consists of two students, and there are $5$ programming tasks, $6$ UI tasks and $4$ algorithm tasks, then a capacity of $3$ on programming and UI tasks and a capacity of $2$ on algorithm tasks would ensure that both students are involved in about the same number of tasks from each category.
Clearly, the capacity constraints should be large enough so that all of the items in a given category could be assigned to the agents. An allocation satisfying all capacity constraints is called \emph{feasible}.

Note that, without capacity constraints, if one agent  evaluates an item as a good, while another agent evaluates it as a chore, we can simply give it to the agent who evaluates it as a good, as done by \citet{aziz2022fair}. However, with capacities it may not be possible, which shows that the combination of capacities and mixed valuations is more difficult than each of these on its own.

Two important considerations in item allocation are \emph{efficiency} and \emph{fairness}.
As an efficiency criterion, we use \emph{Pareto optimality} (PO), which means that no other feasible allocation is at least as good for all agents and strictly better for some agent.
As fairness criteria, we use two relaxations of \emph{envy-freeness} (EF). 
The stronger one is \emph{envy-freeness up to one item (EF1)}, which was
introduced by \citet{budish2011}, and 
adapted by \citet{aziz2022fair} for a mixture of goods and chores. Intuitively, an allocation is EF1 if for each pair of agents $i,j$, after removing the most difficult chore (for $i$) from $i$'s bundle, \emph{or} the most valuable good (for $i$) from $j$'s bundle, $i$ would not be jealous of $j$. 

With capacity constraints, an EF1 allocation may not exist. For example, consider a scenario with one category with two items, $o_1$ and $o_2$, and capacity constraint of $1$. $o_1$ is a good for both agents (e.g., $u_1(o_1)=u_2(o_1)=1)$, and $o_2$ is a chore for both agents (e.g., $u_1(o_2)=u_2(o_2)=-1)$. Clearly, in every feasible allocation, one agent must receive the good and the other agent must receive the chore (due to the capacity constraint), and thus the allocation is not EF1.
Therefore, we introduce a natural relaxation of it, which we call
\emph{envy-freeness up to one good and one chore (EF[1,1])}. It means that,
for each pair of agents $i,j$, there exists a chore in $i$'s bundle, and a good in $j$'s bundle, such that both are in the same category, and after removing them,
$i$ would not be jealous of $j$. 
In the special case in which, for each agent and category, either all items are goods or all items are chores (as in the student project example above), EF[1,1] is equivalent to EF1. We call this special case a \emph{same-sign instance}; note that it is still more general than only-goods or only-chores settings.

We focus on allocation problems between two agents. This case is practically important. For example, student projects are often done in teams of two, and household chores are often  carried out by the two partners. Fair allocation among two agents is the focus of various papers on fair division \cite{2.08.nicolo2008strategic,2.12.brams2012undercut,2.14.brams2014two,2.15.aziz2015note,2.17.nicolo2017divide,2.18.kilgour2018two,2.22.tucker2022playing,Brczi2020}.
% In order to pave the path for generalizing our results for more than two agents, we write all intermediate results (lemmas) in the most general way, that is, we mention if the lemma works for more than 2 agents.

We prove the existence of PO and EF[1,1] allocations with capacity constraints
for two agents with arbitrary (positive or negative) utilities over the items.
The proof is constructive:
we provide a polynomial-time algorithm that, for two agents, returns an allocation that is both PO and EF[1,1]. In a same-sign instance, the returned allocation is PO and EF1.

Our focus on the case of two agents allows us to simultaneously make two advancements over the state-of-the-art in capacity-constrained fair allocation \cite{biswas2018,dror2021fair}:
First, we handle a mixture of goods and chores, rather than just goods. As we show in Appendix \ref{app:dont-work},
standard techniques used for goods are not applicable for mixed utilities.
Second, we attain an allocation that is not only fair but also PO.
Before this work, it was not even known if a PO and EF1 allocation of goods with capacity constraints always exists.

Our algorithm  is based on the following ideas. 
The division problem can be considered as a matching problem on a bipartite graph, in which one side represents the agents and the other side represents the items. We add dummy items and clones of agents such that in every matching the capacity constraints are guaranteed. 
We assign a positive weight to each agent. We assign, to each edge between an agent and an item, a weight which is the product of the agent's weight and the valuation of the agent to the item.
A maximum-weight matching in this graph represents a feasible allocation that maximizes a weighted sum of utilities. 
Every allocation that maximizes a weighted sum of utilities, with positive agent weights, is Pareto-optimal.\footnote{
In fact, maximizing a weighted sum of utilities is stronger than Pareto-optimality. 
When allocating goods without capacity constraints, 
maximizing a weighted sum of utilities is equivalent to a stronger efficiency notion called \emph{fractional Pareto-optimality} \cite{negishi1960welfare,varian1976two,barman2018finding}.
}
Our algorithm first computes a maximum-weight matching that is also envy-free (EF) for one of the agents. It then tries to make it EF[1,1], while maintaining it a maximum-weight matching, by identifying pairs of items that can be exchanged between the agents, based on a ratio that captures how much one agent prefers an item relative to the other agent's preferences. 
Every exchange of items is equivalent to increasing the jealous agent's weight and decreasing the other agent's weight. 
% Therefore, our technique resembles the moving-knife technique that is used when dividing a divisible item (i.e., a cake), but our algorithm requires only a discrete number of steps, which is polynomial in the number of items.

%%%%%%%%%%%%%%%%%%%%%%%%%%%%%%%%%%%%%%%%%%%%%%%%%%%%%%%%%%%%%%%%%%%%%%%%

\section{Related Work}
Fair division problems vary according to the nature of the objects being divided, the preferences of the agents, and the fairness criteria. Many algorithms have been developed to solve fair division problems, for details see the surveys of such algorithms \cite{brams1996fair},  \cite{moulin2004fair}, \cite{brams2007mathematics}, \cite{bouveret2016fair}.

In this paper we consider a new setting, which combines goods, chores, capacity constraints and Pareto-optimality. Note that even ignoring PO, goods, or both, our result is new.

\subsection{Mixtures of Goods and Chores}

\citet{Brczi2020} present a polynomial-time algorithm for finding an EF1 allocation for two agents with arbitrary utility functions (positive or negative).
\citet{Chen2020} proved that the leximin solution is EFX (a property stronger than EF1) for combinations of goods and chores, for agents with identical valuations.
\citet{gafni2021unified} 
present a generalization of both goods and chores, by considering items that may have several copies.
All these works do not consider efficiency.
Efficiency in a setting with goods and chores is studied by \citet{aziz2022fair}. They use the round-robin technique for finding an EF1 and PO division of combinations of goods and chores between two agents.
Similarly, \citet{aziz2020polynomial} find an allocation that is PROP1 (a property weaker than EF1) and PO for goods and chores.
\citet{aleksandrov2019greedy} 
prove that, with tertiary utilities, EFX and PO allocations always exist for mixed items.
However, all of these works do not handle capacity constraints.

\subsection{Constraints}
% capacity constraints
When all agents have weakly additive utilities, the round-robin protocol finds a complete EF1 division in which all agents receive approximately the same number of goods \cite{caragiannis2016}. 
This technique, together with the \emph{envy-graph}, has been used for finding a fair division of goods under capacity constraints \cite{biswas2018}.
This work has been extended to heterogeneous capacity constraints \cite{dror2021fair}, 
and to maximin-share fairness \cite{hummel2021guaranteeing}.

Fair allocation of goods of different categories has been studied by
\citet{mackin2016allocating}. The constraint is that each agent must receive at least one item per category. 
\citet{sikdar2019mechanism} consider an exchange market in which each agent holds multiple items of each category and should receive a bundle with exactly the same number of items of each category.
\citet{nyman2020fair} study a similar setting (they call the categories ``houses'' and the items ``rooms''), but with monetary transfers (``rent'').

Several other constraints have been considered. For example, 
\citet{bilo2022almost} study the fair division of goods such that each bundle needs to be connected on an underlying graph. \citet{igarashi2019} study PO allocation of goods with connectivity constraints.
An overview of the different types of constraints that have been considered can be found in \cite{suksompong2021constraints}.

\subsection{Efficiency and Fairness}
There are several techniques for finding a division of goods that is EF1 and PO. For example, the Maximum Nash Welfare algorithm selects a complete allocation that maximizes the product of utilities. It assumes that the agents' utilities are additive, and the resulting allocation is both EF1 and PO \cite{caragiannis2016,Wu2020}.

In the context of fair cake-cutting (fair division of a continuous resource), 
\citet{WELLER19855} proved the existence of an EF and PO allocation by considering the set of all allocations that maximize a weighted sum of utilities. We adapted this technique for the setting with indivisible items and capacity constraints.
\citet{barman2018finding} present a price-based mechanism that finds an EF1 and PO allocation of goods in pseudo-polynomial time.
Similarly, \citet{barman2019proximity} use a price-based approach to show that fair and efficient allocations can be computed in strongly polynomial time.
The price-based approach can be seen as a ``dual'' of our weight-based approach.

\citet{garg2022fair} present an algorithm for EF1 and PO allocation of chores when agents have bivalued preferences.
With general additive preferences,
the existence of an PO and EF1 allocation of chores for three agents (without capacity constraints) was proved only very recently by \citet{garg2022improving}. The authors claim that ``the case of chores turns out to be much more difficult to work with, resulting in relatively slow progress despite significant efforts by many researchers''. Indeed, for four or more agents, existence is still open even for only-chores instances and without capacity constraints.

\subsection{Alternative Techniques}
Our setting combines a mixture of goods and chores, capacity constraints, and a guarantee of both fairness and efficiency.
These three issues were studied in separation, but not all simultaneously.
Although previous works have developed useful techniques, they do not work for our setting.
For example, using the \emph{top-trading graph} presented by \citet{vaish2020} for dividing chores does not work when there are  capacity constraints.
The reason is that if we allocate an item to the ``\emph{sink}'' agent (i.e., an agent that does not envy any agent) on the top-trading graph, we may exceed the capacity constraints.
As another example, consider the maximum-weighted matching algorithm of \citet{brustle2019}. It is not hard to modify the algorithm to work with chores, but adding capacity constraints on each category might not maintain the EF1 property between the categories.
See Appendix \ref{app:dont-work} for more details.

Therefore, in this paper we develop a new technique for finding PO and EF1 (or EF[1,1]) allocation of the set of items (goods and chores), that also maintains capacity constraints.

% \subsection{Summary}
Table~\ref{tbl:summary} summarizes some of the previous results mentioned in this section, which are close to our setting.
% All papers shown in the summary table assume that the items are indivisible.

\begin{table*}[hbpt]
    \centering
\caption{Summary of some works on fair allocation of indivisible items}
\label{tbl:summary}
\begin{tabular*}
{\textwidth}{@{\extracolsep{\fill}}| p{7mm} | c | p{12mm} | c | c | p{40mm} | p{10mm} | c | p{38mm} |}
% c | c | c | c | c | c | c | c | c |}
    \hline
    {paper} & {agents} & {utilities} & {goods} & {chores} & {constraints} & {fairness} & {PO} & {result} \\
    \midrule
    \cite{Brczi2020} & 2 & arbitrary & {v} & {v} & {-} & EF1 & {-} & {polynomial-time algorithm} \\
    \hline
    \cite{Chen2020} & any & identical & {v} & {v} & {-} & EFX & {-} & {the leximin solution} \\
    \hline
    \cite{gafni2021unified} & {any} & {leveled} & {v} & {v} & {-} & {EFX} & {-} & {existence proof} \\
    \hline
    \cite{aziz2022fair} & {2} & {arbitrary} & {v} & {v} & {-} & {EF1} & {v} & {round-robin technique} \\
    \hline
    \cite{aziz2020polynomial} & {any} & {arbitrary} & {v} & {v} & {-} & {PROP1} & {v} & {polynomial-time algorithm} \\
    \hline
% TODO: Add this https://arxiv.org/abs/2212.02440
    \cite{aleksandrov2019greedy} & {any} & {tertiary} & {v} & {v} & {-} & {EFX} & {v} & {existence proof} \\
    \hline
    \cite{caragiannis2016} & {any} & {weakly additive} & {v} & {-} & {approximately the same number} & {EF1} & {-} & {round-robin protocol} \\
    \hline
    \cite{biswas2018} & {any} & {additive} & {v} & {-} & {capacity constraints} & {EF1} & {-} & {round-robin protocol and envy-graph} \\
    \hline
    \cite{dror2021fair} & {any} & {heterog- eneous} & {v} & {-} & {heteroge- neous capacity constraint} & {EF1} & {-} & {polynomial-time algorithm} \\
    \hline
    \cite{hummel2021guaranteeing} & {any} & {additive} & {v} & {-} & {capacity constraint} & {MMS} & {-} & {polynomial-time algorithm} \\
    \hline
    \cite{mackin2016allocating} & {any} & {heterog- eneous and combinatorial} & {v} & {-} & {each agent gets at least one item per category} & {egalita- rian rank} & {-} & {characterize egalitarian + utilitarian rank-efficiency of categorial sequential allocation mechanisms.} \\
    \hline
    \cite{bilo2022almost} & {any} & {identical} & {v} & {-} & {each bundle needs to be connected on an underlying graph} & {EF1} & {-} & {polynomial-time algorithm} \\
    \hline
    \cite{igarashi2019} & {any} & {additive} & {v} & {-} & {bundles must be connected in an underlying item graph} & {EF1} & {v} & {non-existence on a path graph} \\
    \hline
    \cite{caragiannis2016} & {any} & {additive} & {v} & {-} & {-} & {EF1} & {v} & {max Nash welfare algorithm} \\
    \hline
    \cite{Wu2020}& {any} & {additive} & {v} & {-} & {each agent has a budget constraint on the total cost of items she receives} & {1/4-EF1} & {v} & {max Nash welfare algorithm} \\
    \hline
    \cite{barman2018finding} & {any} & {additive} & {v} & {-} & {-} & {EF1} & {v} & {pseudo-poly. time algorithm} \\
    \hline
    \cite{vaish2020} & {any} & {additive} & {-} & {v} & {-} & {EF1} & {-} & {polynomial-time algorithm} \\
    \hline
    \cite{brustle2019} & {any} & {additive} & {v} & {-} & {-} & {EF1} & {-} & {max weighted matching} \\
    \hline
    \cite{Bhaskar2020} & {3} & {additive} & {v} & {-} & {-} & {EFX} & {-} & {existence proof} \\
    \hline
    \cite{garg2022fair} & {any} & {additive, bivalued} & {-} & {v} & {-} & {EF1} & {v} & {polynomial-time algorithm} \\
    \hline
    \hline
    \textbf{We} & \textbf{2} & \textbf{additive} & \textbf{v} & \textbf{v} & \textbf{capacity constraints} & \textbf{EF1 || EF[1,1]} & \textbf{v} & \textbf{polynomial-time algorithm} \\
    \hline
\end{tabular*}
\end{table*}

%%%%%%%%%%%%%%%%%%%%%%%%%%%%%%%%%%%%%%%%%%%%%%%%%%%%%%%%%%%%%%%%%%%%%%%%

\section{Notations}
An instance of our problem is a tuple $I = (N, M, C, S, U)$:
\begin{itemize}
    \item $N = [n]$ is a set of $n$ agents.
    \item $M = (o_1,\ldots,o_m)$ is a set of $m$ items.
    \item $C = (C_1, C_2, ..., C_k)$ is a set of $k$ categories. The categories are pairwise-disjoint and $M = \bigcup _{j} C_j$.
    \item $S = (s_1, s_2, ..., s_k)$ is a list of size $k$, containing the capacity constraint of each category. We assume that $\forall j \in [k]$: $\frac{|C_j|}{n} \leq s_j \leq |C_j|, s_j\in \mathds{N}$.
    The lower bound is needed to ensure we can divide all the items, and not "throw" anything away, and the upper bound is a trivial bound used for computing the run-time.
    \item
    $U$ is an $n$-tuple of utility functions $u_i : M \rightarrow \mathbb{R}$. We assume additive utilities, that is, $u_i(X) := \sum _{o \in X} u_i(o)$ for $X \subseteq M$.
\end{itemize}  

In a general \emph{mixed instance}, each utility can be any real number (positive, negative or zero).
A \emph{same-sign instance} is an instance in which, for each agent $i\in N$ and category  $j \in [k]$,
$C_j$ contains only goods for $i$ or only chores for $i$. That is, 
either $u_i(o)\geq 0$ for all $o\in C_j$, or 
$u_i(o)\leq 0$ for all $o\in C_j$.
Note that, even in a same-sign instance, it is possible that each agent evaluates different categories as goods or chores, and that different agents evaluate the same item differently.

An \emph{allocation} is a vector $A := (A_1, A_2, ..., A_n)$, 
with $\forall i,j \in [n], i \neq j : A_i \cap A_j = \emptyset$ and $\bigcup _{i \in [n]} A_i = M$.
$A_i$ is called "agent $i$'s bundle". 
An allocation $A$ is called \emph{feasible} if for all $i\in[n]$,
the bundle $A_i$ contains at most $s_c$ items of each category $C_c$, for each $c\in [k]$.

\begin{definition}[Due to \citet{aziz2022fair}]
\label{def:ef1}
An allocation $A$ is called \textit{Envy Free up to one item (EF1)} if for all $i,j\in N$, at least one of the following holds:
\begin{itemize}
\item $\exists T\subseteq A_i$ with $|T|\leq 1$,
s.t. $u_i(A_i\setminus T) \geq u_i(A_j)$.
\item $\exists G\subseteq A_j$ with $|G|\leq 1$,
s.t. $u_i(A_i) \geq u_i(A_j\setminus G)$.
\end{itemize}
\end{definition}

We also define a slightly weaker fairness notion, that we need for handling general mixed instances, in which an EF1 allocation is not guaranteed to exist, as shown in Introduction.

\begin{definition}
\label{def:ef11}
An allocation $A$ is called \textit{Envy Free up to one good and one chore (EF[1,1])} if for all $i,j\in N$, there exists a set $T\subseteq A_i$ with $|T|\leq 1$,
and a set $G\subseteq A_j$ with $|G|\leq 1$,
such that G and T are of the same category,
and 
$u_i(A_i \diagdown T) \geq u_i(A_j \diagdown G)$.
\end{definition}

The uncategorized setting of \citet{aziz2022fair} can be reduced to our setting by putting each item in its own category, with a capacity of $1$.
An allocation is EF[1,1] in the categorized instance if-and-only-if it is EF1 (by Definition \ref{def:ef1}) in the original instance.

Throughout the paper, any result that is valid for mixed instances with EF[1,1] is also valid for same-sign instances with EF1. 
This follows from the following lemma. 
\begin{lemma}
\label{EF1equivalent}
In a same-sign instance, EF[1,1] is equivalent to EF1.
\end{lemma}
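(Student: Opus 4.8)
The plan is to prove the two implications separately, noting that one direction holds in any instance while the same-sign hypothesis is needed only for the converse. First I would establish EF1 $\Rightarrow$ EF[1,1], which needs no assumption on signs. Fix a pair $i,j$ and suppose EF1 holds. If it holds through the first clause, via some $T \subseteq A_i$ with $|T|\le 1$ and $u_i(A_i\setminus T)\ge u_i(A_j)$, I take the EF[1,1] witnesses to be this $T$ together with $G=\emptyset$; if it holds through the second clause I symmetrically take $T=\emptyset$ and let $G$ be the single good removed from $A_j$. Either way the required inequality $u_i(A_i\setminus T)\ge u_i(A_j\setminus G)$ is exactly the EF1 inequality, and the ``same category'' clause of Definition \ref{def:ef11} is satisfied because one of the two removed sets is empty.

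The substantive direction is EF[1,1] $\Rightarrow$ EF1, and this is where the same-sign structure enters. Fix $i,j$ and let $T\subseteq A_i$, $G\subseteq A_j$ be EF[1,1] witnesses: both have size at most one, lie in a common category $C_c$, and, rewriting by additivity, satisfy $u_i(A_i)-u_i(T)\ge u_i(A_j)-u_i(G)$. If either $T$ or $G$ is empty this inequality already coincides with one of the two EF1 clauses, so the only case requiring work is $T=\{t\}$, $G=\{g\}$ with $t,g\in C_c$. Here the same-sign hypothesis guarantees that $C_c$ is uniformly goods or uniformly chores for agent $i$, and I split on these two possibilities.

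In the all-goods case $u_i(T)\ge 0$ and $u_i(G)\ge 0$, so discarding the nonnegative term $u_i(T)$ after moving it across gives $u_i(A_i)\ge u_i(A_j)-u_i(G)=u_i(A_j\setminus G)$, the second EF1 clause; in the all-chores case $u_i(T)\le 0$ and $u_i(G)\le 0$, so dropping the nonnegative quantity $-u_i(G)$ from the right gives $u_i(A_i)-u_i(T)=u_i(A_i\setminus T)\ge u_i(A_j)$, the first EF1 clause. This exhausts the cases and yields the equivalence. I expect the only delicate point to be fixing the convention that a singleton and the empty set count as ``of the same category,'' which is what makes the easy direction go through; the genuine content is the two-line sign argument just given, and it is precisely this step that breaks when a category may mix goods and chores for the same agent.
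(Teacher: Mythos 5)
Your proof is correct and the substantive direction (EF[1,1] $\Rightarrow$ EF1, splitting on whether the common category is all-goods or all-chores for agent $i$ and dropping the appropriate signed term) is exactly the argument the paper gives. You additionally spell out the easy converse EF1 $\Rightarrow$ EF[1,1] with an empty witness set, which the paper leaves implicit; that is a harmless (indeed slightly more complete) addition, not a different approach.
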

\begin{proof}
Suppose that some allocation, $A$, for a same-sign instance is EF[1,1]. Therefore, for all $i,j\in N$, $\exists T\subseteq A_i$ with $|T|\leq 1$,
and $\exists G\subseteq A_j$ with $|G|\leq 1$, such that G and T are of the same category, and
$u_i(A_i \diagdown T) \geq u_i(A_j \diagdown G)$.

If $|G|=0$ or $|T|=0$, then $A$ is EF1, by definition. So assume that $|G|=|T|=1$. 
Since G and T are in the same category, and in a same-sign instance, for each agent $i\in [n]$ and category $c\in [k]$, $C_c$ contains only goods for $i$ or only chores for $i$, then, for all $j\in [n]$, 

if $C_c$ is a category of goods for agent $i$, then $u_i(A_i \diagdown T) \geq u_i(A_j \diagdown G)$ implies 
$u_i(A_i)\geq u_i(A_j \diagdown G)$, 
so both allocations are EF1 for agent $i$.
If $C_c$ is a category of chores for agent $i$, then $u_i(A_i \diagdown T) \geq u_i(A_j \diagdown G)$ implies 
$u_i(A_i \diagdown T)\geq u_i(A_j)$,
so again both allocations are EF1 for agent $i$.
\end{proof}

\begin{remark}
Our new EF[1,1] is reminiscent of another guarantee called $EF_1^1$, that is, envy-freeness up to adding a good to one agent and removing a good from another agent \cite{barman2019proximity}. But lemma \ref{EF1equivalent} implies that EF[1,1] is stronger. The reason is that if there are only goods, it is enough to remove one good from an agent's bundle, and there is no need to also add a good to the envious agent's bundle.

EF[1,1] can be seen as a generalization of EF1 as defined in [Aziz et al. 2022] to the case of categorized items (you just have to define one category for every item, with an upper bound equal to one).
\end{remark}

\begin{remark}
The restriction in Definition \ref{def:ef11} that $G$ and $T$ should be of the same category is essential for Lemma \ref{EF1equivalent}. 
To see this, denote by EF[1,1,U] the unrestricted variant of EF[1,1], allowing to remove one chore and one good from any category.
Suppose that there are two categories: one of them contains a good (for both agents) and the other contains a chore (for both agents). If one agent gets the good and the other agent gets the chore, the allocation is EF[1,1,U], and it is a same-sign instance, but it is not EF1.

Any EF[1,1] allocation is clearly EF[1,1,U]. Therefore, proving that our algorithm returns an EF[1,1] allocation implies two things at once: in general instances, it returns an EF[1,1,U] allocation;
and in same-sign instances, our algorithm returns an EF1 allocation.
\end{remark}

Finally, we recall two definitions:
\begin{definition}
Given an allocation $A$ for $n$ agents, the \emph{envy graph} of $A$ is a graph with $n$ nodes, each represents an agent, and there is a directed edge $i\rightarrow j$ iff $i$ envies $j$ in allocation $A$.
A cycle in the envy graph is called an \emph{envy cycle}. 
\end{definition}

Our efficiency criterion is defined next:

\begin{definition}
Given an allocation $A$, another allocation $A'$ is a \textit{Pareto-improvement} of $A$ if $u_i(A'_i) \geq u_i(A_i)$ for all $i \in N$, and $u_j(A'_j) > u_j(A_j)$ for some $j \in N$.

A feasible allocation $A$ is \textit{Pareto-Optimal} $(PO)$ if no feasible allocation is a Pareto-improvement of $A$.
\end{definition}

\section{Finding a PO and EF[1,1] Division}
In this section, we present some general notions that can be used for any number of agents.

Then, we present our algorithm that finds in polynomial time a feasible PO allocation with two agents.
In any mixed instance, this allocation is also EF[1,1]; in a same-sign instance, it is also EF1, according to Lemma \ref{EF1equivalent}.

\subsection{Preprocessing}
We preprocess the instance such that,
in any feasible allocation, all bundles have the same cardinality.
To achieve this, we add to each category 
$C_c$ with capacity constraint $s_c$,
some $ns_c - |C_c|$ dummy items with a value of $0$ to all agents.
In the new instance, each bundle must contain exactly $s_c$ items from each category $C_c$.
From now on, without loss of generality, we assume that $|M|=m=\sum_{c\in[k]} ns_c$.
This implies that, in every feasible allocation $A$, we have $|A_i| = m/n$ for all $i\in [n]$.

\subsection{Maximizing a Weighted Sum of Utilities}
Our algorithm is based on searching the space of PO allocations. 
Particularly, we consider allocations that maximize a weighted sum of utilities $w_1 u_1 + w_2 u_2 + ... + w_n u_n$, where each agent $i$ is associated with a weight $w_i \in [0,1]$, and $w_1+w_2+...+w_n=1$. Such allocations can be found by solving a maximum-weight matching problem in a weighted bipartite graph. 
We denote the set of all agents' weights by $w=(w_1,w_2,...,w_n)$.

\begin{definition}
For any $n$ real numbers (weights)
$w=(w_1,w_2,...,w_n)$, such that,
$\forall i \in [n], w_i \in [0,1]$, 
and $w_1+w_2+...+w_n=1$,
let 
$G_w$ be a bipartite graph $(V_1 \cup V_2,E)$ with  $|V_1|=|V_2|=m$. 
$V_2$ contains all $m$ items (of all categories, including dummies). $V_1$ contains $\frac{m}{n}$ copies of each agent $i\in [n]$.
% (note that our preprocessing step implies that $m$ is a multiple of $n$).
For each category $c\in [k]$, we choose distinct $s_c$ copies of each agent and add an undirected edge from each of them to all the $n s_c$ items of $C_c$.
Each edge $\{i,o\} \in E$, $i\in V_1, o\in V_2$ has a weight $w(i,o)$, where: 
\[   
w(i,o) := w_i\cdot u_i(o)
\]

An allocation is called \emph{$w$-maximal} if it corresponds to a maximum-weight matching among the maximum-cardinality matchings in $G_w$.
\end{definition}

\begin{prop}
\label{all-PO}
Every $w$-maximal allocation, where $w_1,w_2,\ldots\allowbreak,w_n\in (0,1)$, is PO.
\end{prop}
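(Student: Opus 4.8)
The plan is to prove the contrapositive: I would show that if a $w$-maximal allocation $A$ is \emph{not} Pareto-optimal, then it cannot be a maximum-weight matching among maximum-cardinality matchings in $G_w$, contradicting $w$-maximality. So suppose $A$ is not PO; then by definition there exists a feasible allocation $A'$ that Pareto-dominates it, meaning $u_i(A'_i) \geq u_i(A_i)$ for every agent $i$ and $u_j(A'_j) > u_j(A_j)$ for at least one agent $j$.

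First I would observe that, after the preprocessing step, every feasible allocation assigns exactly $s_c$ items of each category to each agent, and therefore corresponds to a maximum-cardinality (in fact perfect, on the item side) matching in $G_w$. Thus both $A$ and $A'$ correspond to maximum-cardinality matchings, so the comparison between them is exactly the comparison the $w$-maximality criterion makes. Next I would compute the weight of each matching: the total weight of the matching corresponding to an allocation $B$ is $\sum_{i\in[n]} w_i\, u_i(B_i)$, since each edge $\{i,o\}$ contributes $w_i u_i(o)$ and additivity of $u_i$ lets me sum over the items in each bundle.

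The key step is then the weighted-sum comparison. Multiplying each of the inequalities $u_i(A'_i) \geq u_i(A_i)$ by the strictly positive weight $w_i \in (0,1)$ preserves the inequality, and multiplying the strict inequality $u_j(A'_j) > u_j(A_j)$ by $w_j > 0$ keeps it strict. Summing over all agents yields
\[
\sum_{i\in[n]} w_i\, u_i(A'_i) > \sum_{i\in[n]} w_i\, u_i(A_i),
\]
i.e. the matching for $A'$ has strictly larger weight than the matching for $A$. Since both are maximum-cardinality matchings, this contradicts the assumption that $A$ is $w$-maximal (a maximum-weight matching among maximum-cardinality matchings). Hence no Pareto-improvement exists and $A$ is PO.

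The only subtlety — and the one place I would be careful — is the requirement that all weights be \emph{strictly} positive, $w_i \in (0,1)$ rather than merely in $[0,1]$: strict positivity is exactly what lets me multiply a weak Pareto inequality and keep it weak while multiplying the one strict inequality and keeping it strict, so that the sum is strict. If some weight were zero, the corresponding agent's improvement could fail to register in the weighted sum, and a $w$-maximal allocation might only be weakly Pareto-optimal. I do not expect any real obstacle here; the argument is a clean one-line summation once the correspondence between feasible allocations and maximum-cardinality matchings (via the preprocessing) and the formula for matching weight are in place.
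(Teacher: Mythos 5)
Your proposal is correct and follows essentially the same route as the paper's proof: identify the weight of the matching with the weighted sum $\sum_i w_i u_i(A_i)$, and observe that any Pareto-improvement would strictly increase this sum because all weights are strictly positive, contradicting $w$-maximality. Your version just spells out the details (the matching correspondence after preprocessing, and why strict positivity of the $w_i$ is needed) that the paper leaves implicit.
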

\begin{proof}
Every $w$-maximal allocation $A=(A_1,A_2,...,A_n)$ maximizes the sum $w_1u_1(A_1)+w_2u_2(A_2)+...+w_nu_n(A_n)$.
Every Pareto-improvement would increase this sum. Therefore, there can be no Pareto-improvement, so $A$ is PO.
\end{proof}

\subsection{Exchanging Pairs of Items}
Our algorithm starts with a $w$-maximal allocation, and repeatedly exchanges pairs of items between the agents in order to find an allocation that is also EF[1,1]. To determine which pairs to exchange, we need some definitions and lemmas.

\begin{definition}
Given a feasible allocation $A = (A_1,A_2,...,A_n)$, an \emph{exchangeable pair} is a pair $(o_i,o_j)$ of items, $o_i\in A_i$ and $o_j\in A_j$, $i,j\in [n], i\neq j$, 
such that 
$o_i$ and $o_j$ are in the same category.

This implies that $A_i\setminus \{o_i\}\cup \{o_j\}$
and $A_j\setminus \{o_j\}\cup \{o_i\}$ are both feasible. Additionally, in a same-sign instance, 
for each agent, $o_i, o_j$ are in the same "type", 
that is, both goods or both chores.
\end{definition}

In this paper, we work a lot with exchangeable pairs, so we use
$o_i,o_j \in A_i,A_j$ as a shorthand for ``$o_i\in A_i$ and $o_j\in A_j$''.

\subsubsection{Finding a Fair Allocation}

The following two lemmas deal with fairness while exchanging exchangeable pairs in a $w$-maximal allocation.

\begin{lemma}
\label{both-EF1-for-someone}
Let $A$ be a $w$-maximal feasible allocation, 
and let $A'$ be another feasible allocation, resulting from $A$ by exchanging an exchangeable pair $(o_{i},o_{j})$ between some two agents $i\neq j$. 
Then there exists some ordering of the agents, $k_1,\ldots,k_n$, such that for all $y>x$, the EF[1,1] condition is satisfied for agent $k_y$ with respect to agent $k_x$ in \emph{both} allocations $A$ and $A'$. 
That is, $k_y$ envies $k_x$ up to one good and one chore in both allocations. 

In particular, there is at least one agent (agent $k_n$) for whom \emph{both} $A$ and $A'$ are EF[1,1].
\end{lemma}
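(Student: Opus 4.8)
The plan is to recast the statement as an acyclicity property and then topologically sort. Say that agent $p$ \emph{EF[1,1]-envies} agent $q$ in a given allocation if the EF[1,1] condition of Definition~\ref{def:ef11} \emph{fails} for the ordered pair $(p,q)$, and let $H$ be the directed graph on the $n$ agents whose edges are the pairs $p\to q$ that are EF[1,1]-envy edges in $A$ \emph{or} in $A'$. An ordering $k_1,\dots,k_n$ with the required property is exactly a topological ordering of $H$: demanding that for every $y>x$ agent $k_y$ satisfy EF[1,1] toward $k_x$ in \emph{both} allocations is the same as demanding that $H$ contain no edge $k_y\to k_x$ with $y>x$, i.e.\ that every edge $p\to q$ go forward in the order. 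Such an ordering exists, and then the last vertex $k_n$ is a sink of $H$ (hence EF[1,1] toward everyone in both allocations), if and only if $H$ is acyclic. So the whole lemma reduces to proving that $H$ has no directed cycle.

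First I would handle each allocation separately. I claim the EF[1,1]-envy graph of any feasible $w$-maximal allocation is acyclic. If $p$ does not ordinarily envy $q$ then taking $T=G=\emptyset$ already witnesses EF[1,1], so every EF[1,1]-envy edge is in particular an ordinary envy edge, and a cycle in the EF[1,1]-envy graph would be an ordinary envy cycle $p_1\to p_2\to\cdots\to p_r\to p_1$. By the preprocessing every bundle contains exactly $s_c$ items of each category $C_c$, so reassigning to each $p_\ell$ the bundle of the agent it envies is feasible and strictly increases $u_{p_\ell}$, producing a feasible Pareto-improvement. Since a $w$-maximal allocation is PO by Proposition~\ref{all-PO}, this is impossible; thus $E_A$ (and $E_{A'}$ whenever $A'$ happens to be $w$-maximal as well) is acyclic.

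The core difficulty is that a single ordering must serve both allocations, i.e.\ the \emph{union} $H$ must be acyclic, and $A'$ need not be $w$-maximal. Here I would use $w$-maximality of $A$ quantitatively. Since $A'$ is obtained from $A$ by exchanging $(o_i,o_j)$, this exchange cannot raise the weighted welfare, which gives $w_i\bigl(u_i(o_j)-u_i(o_i)\bigr)+w_j\bigl(u_j(o_i)-u_j(o_j)\bigr)\le 0$. Writing $\delta_i:=u_i(o_j)-u_i(o_i)$ and $\delta_j:=u_j(o_i)-u_j(o_j)$, this reads $w_i\delta_i+w_j\delta_j\le 0$, so the exchange weakly helps one of the two agents and weakly harms the other; up to renaming assume $\delta_i\ge 0\ge\delta_j$ (the case where both are harmed is easier, since then $A$ weakly Pareto-dominates $A'$). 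Because only the bundles of $i$ and $j$ change, every edge of $H$ between two agents outside $\{i,j\}$ is common to $A$ and $A'$; moreover a direct computation shows that from $A$ to $A'$ the envy of the \emph{helped} agent $i$ toward every other agent weakly decreases, so $i$ acquires no new outgoing EF[1,1]-envy edge. Hence the only edges present in $A'$ but absent from $A$ emanate from the harmed agent $j$ or enter $\{i,j\}$.

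The final step, which I expect to be the main obstacle, is to rule out a directed cycle in $H$. Since $E_A$ and $E_{A'}$ are individually acyclic, any cycle must mix edges of the two allocations, and by the previous paragraph every such cycle must pass through the harmed agent $j$ (the only possible new source) and re-enter $\{i,j\}$. I would argue that splicing the two kinds of edges along such a cycle yields either an ordinary envy cycle living entirely in one allocation, excluded by the rotation argument above, or a chain of EF[1,1]-envy inequalities whose accumulation contradicts the welfare bound $w_i\delta_i+w_j\delta_j\le 0$, contradicting the $w$-maximality of $A$. Establishing this dichotomy, namely that a single controlled pair exchange cannot splice two acyclic EF[1,1]-envy graphs into a cycle, is the crux of the proof; the reduction to acyclicity and the single-allocation case are routine by comparison.
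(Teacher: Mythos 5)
Your reduction of the lemma to the acyclicity of the union graph $H$ of EF[1,1]-envy edges over $A$ and $A'$ is correct, and your observation that each of $E_A$, $E_{A'}$ alone would be handled by a rotation argument is fine. But the proof has a genuine gap exactly where you say it does: you never prove that $H$ itself is acyclic. You defer the mixed-cycle case (a cycle using edges from both allocations) to a hoped-for ``splicing'' argument that would either produce a single-allocation envy cycle or contradict $w_i\delta_i+w_j\delta_j\le 0$, but neither alternative is established, and it is not clear either can be: a mixed cycle does not correspond to a feasible bundle rotation in either $A$ or $A'$, and the welfare bound concerns the two exchanged items only, whereas the envy inequalities along a cycle involve whole bundles evaluated by different agents; there is no obvious way to telescope them into a statement about weighted welfare. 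So the crux of the lemma is asserted, not proved.

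The paper closes this gap with a decomposition you are missing, and which makes the whole argument short. Let $C_c$ be the category of $(o_i,o_j)$. After preprocessing every bundle contains at least one $C_c$-item, so one can write $A_x=B_x\cup\{o_x\}$ with $o_x\in C_c$ for every agent $x$; the point is that the partial allocation $(B_1,\ldots,B_n)$ is \emph{common} to $A$ and $A'$, since the exchange only permutes which single $C_c$-item sits on top of $B_i$ and $B_j$. The ordinary envy graph of $(B_1,\ldots,B_n)$ is acyclic, because rotating the $B$-parts along an envy cycle (each agent keeping its own $o_x$) is feasible and would Pareto-improve the $w$-maximal $A$. A topological order of this \emph{single} graph then works simultaneously for $A$ and $A'$: for $y>x$, agent $k_y$'s bundle in either allocation is $B_{k_y}$ plus one $C_c$-item and $k_x$'s is $B_{k_x}$ plus one $C_c$-item, so removing those two same-category items witnesses EF[1,1], since $u_{k_y}(B_{k_y})\ge u_{k_y}(B_{k_x})$. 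In short: instead of intersecting or unioning the two envy graphs, exhibit one acyclic graph that dominates both. Without this (or an equivalent) idea, your outline does not constitute a proof.
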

\begin{proof}
Let $A=(A_1,..,A_n)$ and
$A' = (A_1',...,A_n')$.
Let $C_c$ be the category that contains both items $o_i, o_j$.
By the pre-processing step, every bundle in $A$ contains at least one item from $C_c$. So 
we can write every bundle $A_x$, for all $x\in [n]$, as: $A_x = B_x\cup \{o_x\}$ for some $o_x\in C_c$.
After the exchange, we have for all 
$x\neq i,j: A_x' = A_x = B_x\cup \{o_x\}$, whereas $A_i' = B_i\cup \{o_j\}, A_j' = B_j\cup \{o_i\}$.  

Consider the envy-graph representing the partial allocation $(B_1,B_2,\allowbreak\ldots,B_n)$.
We claim that it contains no cycle. Suppose that it contained an envy-cycle. If we replaced the bundles according to the direction of edges in the cycle, we would get another feasible allocation which is a Pareto-improvement of the current allocation, $A$, which is $w$-maximal. Contradiction!

Therefore, the envy-graph of 
$(B_1,B_2,...,B_n)$ has a topological ordering. 
Let $k_1,\ldots,k_n$ be such an ordering,
so that for all $y>x$, agent $k_y$ prefers $B_{k_y}$ over $B_{k_x}$.
In both allocations $A$ and $A'$, the bundles of both $k_y$ and $k_x$ are derived from $B_{k_y}$ and $B_{k_x}$ by adding a single good or chore. Therefore, in both $A$ and $A'$, the EF[1,1] condition is satisfied for agent $k_y$ w.r.t. agent $k_x$.
In particular, for agent $k_n$, both these allocations are EF[1,1].\footnote{In fact, the result holds not only for an exchange of two items, but also for any permutation of $n$ items of the same category, one item per agent. The proof is the same.}
\end{proof}
Lemma \ref{both-EF1-for-someone} considered a single exchange. Now, we consider a sequence of exchanges. 
The following lemma works only for two agents --- we could not yet extend it to more than two agents.
\begin{lemma}
\label{exists-PO-and-EF1}
Suppose there are $n=2$ agents.
Suppose there is a sequence of feasible allocations $A^1,\ldots,A^x$ with the following properties:
\begin{itemize}
\item $\forall j\in[x]$, the allocation $A^j=(A_1^j,A_2^j)$ is $w$-maximal, where $w=(w_{1,j},w_{2,j})$ for some $w_{1,j},w_{2,j}\in(0,1)$.
\item $A^1$ is EF for agent 1 and $A^x$ is EF for agent 2.
\item $\forall j\in[x-1]$, $A^{j+1}$ is obtained from $A^{j}$ by a single exchange of an exchangeable pair between the agents. 
\end{itemize}
Then, for some $j\in[x]$, the allocation $A^j$ is PO and EF[1,1].
\end{lemma}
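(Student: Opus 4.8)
The plan is to exploit two facts. First, \emph{efficiency comes for free}: since every $A^j$ is $w$-maximal with weights $w_{1,j},w_{2,j}\in(0,1)$, Proposition \ref{all-PO} already guarantees that every $A^j$ is PO. Hence the whole task reduces to locating a single index $j$ for which $A^j$ is EF[1,1]. Second, with only $n=2$ agents the condition of Definition \ref{def:ef11} decomposes into two separate ordered-pair conditions: $A^j$ is EF[1,1] exactly when (i) agent $1$ satisfies the EF[1,1] condition with respect to agent $2$, \emph{and} (ii) agent $2$ satisfies it with respect to agent $1$. I would therefore let $P\subseteq[x]$ be the set of indices $j$ for which $A^j$ satisfies (i), and $Q\subseteq[x]$ the set of indices for which $A^j$ satisfies (ii); the lemma then reduces to showing $P\cap Q\neq\emptyset$.

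Next I would record two boundary conditions and one local condition that pin down $P$ and $Q$. Since $A^1$ is EF for agent $1$, agent $1$ does not envy agent $2$ in $A^1$, which trivially implies (i) by taking the empty removal sets in Definition \ref{def:ef11}; thus $1\in P$. Symmetrically, $A^x$ being EF for agent $2$ gives $x\in Q$. For the local condition I would apply Lemma \ref{both-EF1-for-someone} to each transition $A^j\to A^{j+1}$, which is legitimate because $A^j$ is $w$-maximal and $A^{j+1}$ is obtained from it by a single exchange of an exchangeable pair. For $n=2$ that lemma produces the topologically last agent $k_2$, who is EF[1,1] with respect to $k_1$ in \emph{both} $A^j$ and $A^{j+1}$: if $k_2=1$ then $j,j+1\in P$, and if $k_2=2$ then $j,j+1\in Q$. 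In other words, every consecutive pair $\{j,j+1\}$ lies entirely inside $P$ or entirely inside $Q$.

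The heart of the argument is then a discrete ``intermediate value'' step. Let $j^*=\max P$, which exists because $1\in P$. If $j^*=x$, then $j^*\in Q$ already (since $x\in Q$), so $j^*\in P\cap Q$. Otherwise $j^*<x$, and I apply the local condition to the pair $\{j^*,j^*+1\}$: it cannot lie entirely in $P$, since that would force $j^*+1\in P$ and contradict the maximality of $j^*$; hence it lies entirely in $Q$, giving $j^*\in Q$ and again $j^*\in P\cap Q$. Either way $P\cap Q\neq\emptyset$, and any $j\in P\cap Q$ yields an allocation $A^j$ that is EF[1,1] (both ordered-pair conditions hold) and PO (Proposition \ref{all-PO}), which is exactly what is required.

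The step I expect to be most delicate is verifying the local condition: one must check that Lemma \ref{both-EF1-for-someone} applies at every transition and, once specialized to $n=2$, genuinely pins the \emph{same} agent as EF[1,1] in both endpoints $A^j$ and $A^{j+1}$, so that each consecutive pair is absorbed wholly into $P$ or wholly into $Q$. This two-agent collapse is exactly where the hypothesis $n=2$ is essential: for two agents EF[1,1] is the conjunction of just two ordered-pair conditions, which (through the boundary data and the local condition) behave monotonically enough along the sequence to force a common index; for $n\ge 3$, EF[1,1] becomes a conjunction of $n(n-1)$ ordered-pair conditions, and Lemma \ref{both-EF1-for-someone} only controls those emanating from the topologically last agent, leaving no evident reason for a single $A^j$ to satisfy all of them simultaneously.
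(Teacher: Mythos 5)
Your proof is correct. It uses the same key ingredient as the paper's proof --- Lemma \ref{both-EF1-for-someone} together with the boundary facts that $A^1$ is EF for agent 1 and $A^x$ is EF for agent 2 --- but organizes the bookkeeping differently. The paper tracks \emph{full} envy-freeness along the sequence: it first notes that, since every $A^j$ is PO (and swapping the two bundles is feasible), the agents can never envy each other simultaneously, so at every index at least one agent is EF; it then locates a transition index $j$ with $A^j$ EF for agent 1 and $A^{j+1}$ EF for agent 2, and applies Lemma \ref{both-EF1-for-someone} \emph{once}, at that transition, concluding that either $A^j$ or $A^{j+1}$ is EF[1,1]. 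You instead track the two ordered EF[1,1] conditions directly via the sets $P$ and $Q$, apply Lemma \ref{both-EF1-for-someone} at \emph{every} transition to show each consecutive pair is absorbed wholly into $P$ or wholly into $Q$, and finish with the $\max P$ argument. Your route has the mild advantage of not needing the ``no mutual envy'' observation (you invoke Proposition \ref{all-PO} only for the PO conclusion) and of using the endpoint hypotheses only through their weaker EF[1,1] consequences; the paper's route invokes the exchange lemma only once. Both are valid discrete intermediate-value arguments, and your identification of the local absorption step as the delicate point matches where the paper's proof also does its real work.
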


\begin{proof}
Every $A^j$ is PO by Proposition \ref{all-PO}. 
Therefore, it is never possible for the two agents to envy each other simultaneously.
Since at $A^1$ agent 1 is not jealous and at $A^x$ agent 2 is not jealous, there must be some $j\in [x-1]$ in which $A^{j}$ is EF for 1, and $A^{j+1}$ is EF for 2. 

Because $A^{j+1}$ results from $A^j$ by exchanging an exchangeable pair between the agents, by Lemma \ref{both-EF1-for-someone}, there exists an agent $i \in [2]$ such that both $A^{j}$ and $A^{j+1}$ are EF[1,1] for $i$.

If both are EF[1,1] for agent 1, then $A^{j+1}$ is an EF[1,1] allocation. If both are EF[1,1] for agent 2, then $A^j$ is an EF[1,1] allocation.
\end{proof}

To apply Lemma \ref{exists-PO-and-EF1},
we need a way to choose the pair of exchangeable items in each step of the sequence, so that the next allocation in the sequence remains $w$-maximal.
We use the following definition.

\begin{definition}
\label{diff_ratio_definition}
For a pair of agents $i,j\in [n]$ s.t. $i\neq j$, and a pair of items $(o_i,o_j)$, the \emph{difference ratio}, denoted by $r_{j/i}(o_i,o_j)$, is defined as:
\[
r_{j/i}(o_i,o_j) := \frac{u_j(o_i)-u_j(o_j)}{u_i(o_i)-u_i(o_j)}
\]
If $u_j(o_i)= u_j(o_j)$, then the ratio is always 0.
If $u_i(o_i) = u_i(o_j)$ (and $u_j(o_i) \neq u_j(o_j)$), then the ratio is defined as $+\infty$
if $u_j(o_i)> u_j(o_j)$, or $-\infty$ if $u_j(o_i) < u_j(o_j)$.
\end{definition} 

\subsubsection{The Properties of a $w$-maximal Allocation}

% \begin{lemma}
% \label{lem:disjoint-edges}
%     All the cycles that Algorithm \ref{alg:find-exchange-cycles} finds are edge-disjoint. 
% \end{lemma}
% \begin{proof}
% Each cycle 
% % $(o_i, o_j, o_k, ..., o_m)$ 
% in the series places the items in it exactly where they should be according to A' allocation. Therefore, any item found in one cycle, cannot be found in another cycle again, since it does not need to be moved, i.e., the cycles have no edges in common.
% \end{proof}
The following lemma is proved in Appendix \ref{app:lem:diffs}.

\begin{lemma}
\label{lem:diffs}
For any $n$ agents,
for any $w=(w_1,w_2,...,w_n)$ such that $w_1,w_2,...,w_n\in(0,1)$, and an allocation $A=(A_1,...,A_n)$, the following are equivalent:

(i) $A$ is $w$-maximal.

% iff

(ii) Every exchange-cycle does not increase the weighted sum of utilities.
That is, for all $x\geq 2$, a subset of agents $\{a_1,...,a_x\}\in [n]$, and a set of items $o_1,...,o_x$, such that all are in the same category, and $\forall j\in [x], o_j\in A_{a_j}$:
\begin{align*}
    w_{a_1} u_{a_1} (o_1) + w_{a_2} u_{a_2} (o_2) + ... + w_{a_x} u_{a_x} (o_x) \geq \\
    w_{a_1} u_{a_1} (o_x) + w_{a_2} u_{a_2} (o_1) + ... + w_{a_x} u_{a_x} (o_{x-1})
\end{align*}
\end{lemma}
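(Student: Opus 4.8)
The plan is to prove the equivalence of (i) and (ii) by establishing both directions through the standard characterization of maximum-weight matchings via the absence of augmenting/improving configurations. The key observation is that, after the preprocessing step, every maximum-cardinality matching in $G_w$ is a perfect matching that allocates exactly $s_c$ items of each category $C_c$ to each agent, so the set of $w$-maximal allocations is exactly the set of perfect matchings maximizing the total edge weight $\sum_{i} w_i u_i(A_i)$. The link between this matching formulation and the exchange-cycle condition is that an exchange-cycle on items $o_1,\ldots,o_x$ of a single category, with $o_j \in A_{a_j}$, corresponds precisely to an \emph{alternating cycle} in $G_w$: it swaps the matched partners along a cycle that alternates between matched and unmatched edges, and the change in total weight is exactly the difference between the two sides of the displayed inequality in (ii).

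First I would prove $(i)\Rightarrow(ii)$ by contraposition, which is the easy direction. Suppose some exchange-cycle violates the inequality, i.e. the right-hand side strictly exceeds the left-hand side. Then reallocating the items $o_1,\ldots,o_x$ cyclically (giving $o_{j-1}$ to agent $a_j$, indices mod $x$) produces a feasible allocation $A'$: since all items are in the same category and we only permute one item per agent within that category, all capacity constraints are preserved. The weighted sum of utilities of $A'$ strictly exceeds that of $A$, contradicting $w$-maximality. Hence every $w$-maximal allocation satisfies (ii).

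The harder direction is $(ii)\Rightarrow(i)$, and this is where I expect the main obstacle. Here I would appeal to the classical fact that a perfect matching is of maximum weight if and only if there is no alternating cycle of positive weight-gain (the weighted analogue of Berge's theorem). The subtlety is that a general alternating cycle in $G_w$ may touch items from \emph{several} categories and several agent-copies, whereas condition (ii) only quantifies over exchange-cycles confined to a \emph{single} category. So the crux is to argue that any weight-improving alternating cycle can be decomposed into, or implies the existence of, a single-category improving exchange-cycle of the form in (ii). I would do this by projecting a general alternating cycle onto its category components: because each agent has $s_c$ dedicated copies wired only to the $n s_c$ items of $C_c$, the matched edges of any alternating cycle within a fixed category form a sub-permutation on that category's items among the agents, and the total weight change decomposes additively across categories. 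Therefore if the global weight strictly increases, at least one category's contribution must strictly increase, yielding a single-category exchange-cycle that violates (ii). Making this decomposition rigorous—tracking how the copies of each agent are matched and showing the per-category alternating structure really does reduce to a permutation of one item per agent as in the statement—is the technically delicate step, and it is presumably why the authors defer the full argument to the appendix.
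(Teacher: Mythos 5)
Your proposal is correct in substance and its core idea coincides with the paper's: the easy direction is the same contradiction argument, and for $(ii)\Rightarrow(i)$ both arguments decompose the difference between $A$ and an arbitrary better feasible allocation $A'$ into item-disjoint, single-category exchange-cycles whose weight gains add up, so that a strict global improvement forces a strictly improving cycle contradicting (ii). The difference is one of framing: you route through matching theory (symmetric difference of two perfect matchings in $G_w$ is a disjoint union of alternating cycles, i.e.\ the weighted Berge argument), whereas the paper works directly on allocations and exhibits the cycle decomposition constructively via an explicit chain-following procedure (its Algorithm 2). Two remarks on your version. First, the step you flag as the main obstacle --- that an alternating cycle might straddle several categories --- is actually vacuous: by construction each agent-copy in $G_w$ is adjacent only to the items of a single category, so $G_w$ is edge-disjoint across categories and every alternating cycle is automatically confined to one category; no projection argument is needed. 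Second, the genuinely delicate point (which you gesture at but do not resolve, and which the paper's proof also treats informally) is that an alternating cycle may pass through two distinct \emph{copies} of the same agent, whereas condition (ii) quantifies over cycles with one item per \emph{distinct} agent. This is repaired by collapsing copies to agents, obtaining a closed walk on agents with equal in- and out-degrees, decomposing it into edge-disjoint \emph{simple} directed cycles, and observing that edge weights $w_i u_i(o)$ depend only on the agent and not on the copy, so the weight gain still splits additively over these simple cycles; at least one of them then violates (ii). With that step spelled out, your plan yields a complete proof equivalent to the paper's.
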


The following lemma follows from Lemma \ref{lem:diffs}, but only for two agents.

\begin{lemma}
\label{lem:diffs2}
Suppose there are $n=2$ agents.
For any $w_1,w_2\in(0,1)$ and an allocation $A=(A_1,A_2)$, the following are equivalent:

(i) $A$ is $w$-maximal, for $w=(w_1,w_2)$.

(ii) For any exchangeable pair $o_1,o_2 \in A_1,A_2$, exactly one of the following holds:
\begin{align*}
u_1(o_1) > u_1(o_2) && \text{and} && 
w_1/w_2 \geq r_{2/1}(o_1,o_2) && \text{or} 
\\
u_1(o_1) = u_1(o_2) && \text{and} && 
u_2(o_2)\geq u_2(o_1) && \text{or} 
\\
u_1(o_1) < u_1(o_2) && \text{and} && 
w_1/w_2 \leq r_{2/1}(o_1,o_2)
\end{align*}
\end{lemma}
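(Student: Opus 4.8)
The plan is to derive this lemma directly from Lemma~\ref{lem:diffs} by specializing to $n=2$ and then performing an elementary sign analysis. First I would observe that, with only two agents, the exchange-cycle condition in Lemma~\ref{lem:diffs}(ii) degenerates: a subset $\{a_1,\dots,a_x\}$ of $[2]$ with $x\geq 2$ must be all of $\{1,2\}$ with $x=2$, so the only exchange-cycles are pairwise swaps of two items of the same category, one taken from each bundle. Hence condition (ii) of Lemma~\ref{lem:diffs} reduces, for $n=2$, to the single family of inequalities
$$w_1 u_1(o_1) + w_2 u_2(o_2)\ \geq\ w_1 u_1(o_2) + w_2 u_2(o_1)$$
ranging over all exchangeable pairs $o_1,o_2\in A_1,A_2$. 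I would briefly note that taking the ordering $(a_1,a_2)=(2,1)$ instead of $(1,2)$ yields the same family after relabeling, so nothing is lost. By Lemma~\ref{lem:diffs}, $A$ is $w$-maximal if and only if this family of inequalities holds.

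The second step is to rewrite each such inequality as
$$w_1\bigl(u_1(o_1)-u_1(o_2)\bigr)\ \geq\ w_2\bigl(u_2(o_1)-u_2(o_2)\bigr),$$
and then case-split on the sign of $u_1(o_1)-u_1(o_2)$, which is exactly the trichotomy appearing in the three displayed cases. When $u_1(o_1)>u_1(o_2)$, I would divide both sides by the positive quantity $w_2\bigl(u_1(o_1)-u_1(o_2)\bigr)$, using $w_2>0$, to obtain $w_1/w_2\geq r_{2/1}(o_1,o_2)$; here the denominator of the difference ratio is nonzero, so $r_{2/1}$ is the ordinary finite quotient and no $\pm\infty$ convention is invoked. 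When $u_1(o_1)<u_1(o_2)$, the same division is by a negative quantity and flips the inequality, giving $w_1/w_2\leq r_{2/1}(o_1,o_2)$. When $u_1(o_1)=u_1(o_2)$, the left-hand side vanishes and, dividing by $w_2>0$, the inequality becomes $u_2(o_1)\leq u_2(o_2)$, i.e.\ $u_2(o_2)\geq u_2(o_1)$. Each of these steps is reversible, so the rewritten inequality is equivalent to the condition attached to whichever branch of the trichotomy holds.

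Finally, I would address the ``exactly one of the following holds'' phrasing: the three leading conditions $u_1(o_1)>u_1(o_2)$, $u_1(o_1)=u_1(o_2)$, $u_1(o_1)<u_1(o_2)$ are mutually exclusive and exhaustive, so precisely one branch is active for each exchangeable pair, and the swap inequality is equivalent to the second condition of that active branch being satisfied. Combining this with the reduction from the first step yields the stated equivalence. I do not anticipate a deep obstacle; the only points demanding care are the correct collapse of general exchange-cycles to pairwise swaps for $n=2$ and the sign-dependent direction of the inequality when dividing by $u_1(o_1)-u_1(o_2)$, together with checking that the degenerate cases of the difference-ratio definition (numerator or denominator zero) never cause trouble, since the ratio is only used on the branches where its denominator is nonzero.
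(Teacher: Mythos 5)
Your proposal is correct and follows essentially the same route as the paper: both reduce Lemma~\ref{lem:diffs}(ii) for $n=2$ to the single swap inequality $w_1\bigl(u_1(o_1)-u_1(o_2)\bigr)\geq w_2\bigl(u_2(o_1)-u_2(o_2)\bigr)$ and then observe that the trichotomy in (ii) is an equivalent algebraic rewriting of it. The only difference is that you spell out the sign-dependent division and the degenerate cases of the difference ratio, which the paper compresses into the phrase ``an algebraic manipulation of \eqref{eq:diff2}.''
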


\begin{proof}
The only exchange-cycle in a 2-agents allocation is a replacement of an exchangeable pair $o_1,o_2 \in A_1,A_2$ between the agents.
Then, according to Lemma \ref{lem:diffs}, for any exchangeable pair $o_1, o_2\in A_1,A_2$,
\begin{align}
\label{eq:diff0}
w_1 u_1(o_1) + w_2 u_2 (o_2)
\geq
w_1 u_1(o_2) + w_2 u_2 (o_1)
\\
\label{eq:diff1}
w_1 u_1(o_1) - w_2 u_2 (o_1)
\geq
w_1 u_1(o_2) - w_2 u_2 (o_2)
\\
\label{eq:diff2}
w_1 [u_1(o_1)-u_1(o_2)]
\geq 
w_2 [u_2(o_1)-u_2(o_2)]
\end{align}
The claim in (ii) is an algebraic manipulation of \eqref{eq:diff2}, so (ii) $\iff$ \eqref{eq:diff2}. And since (i) $\iff$ \eqref{eq:diff2}, also (i) $\iff$ (ii).
\end{proof}

\begin{lemma}
\label{cases-lemma}
For any $n$ agents,
in any $w$-maximal allocation $A$ (with positive weights), for any $i,j$ and an exchangeable pair $o_i, o_j\in A_i, A_j$, the following implications hold:
\begin{align*}
u_j(o_i) \geq u_j(o_j)
&& \implies && 
u_i(o_i) \geq u_i(o_j)
\\
u_j(o_i) > u_j(o_j)
&& \implies && 
u_i(o_i) > u_i(o_j)
\end{align*}
\end{lemma}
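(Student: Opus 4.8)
The plan is to prove Lemma \ref{cases-lemma} by contraposition, reducing it to the characterization of $w$-maximality already established in Lemma \ref{lem:diffs} (or its two-agent specialization Lemma \ref{lem:diffs2}). Since the claim concerns only two fixed agents $i,j$ and a single exchangeable pair $(o_i,o_j)$, the only relevant exchange-cycle is the swap of this pair between $i$ and $j$, so Lemma \ref{lem:diffs} gives us directly the inequality
\[
w_i\,u_i(o_i) + w_j\,u_j(o_j) \;\geq\; w_i\,u_i(o_j) + w_j\,u_j(o_i),
\]
which rearranges to $w_i[u_i(o_i)-u_i(o_j)] \geq w_j[u_j(o_i)-u_j(o_j)]$.

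First I would treat the first implication. Assume the hypothesis $u_j(o_i)\geq u_j(o_j)$, so the right-hand side $w_j[u_j(o_i)-u_j(o_j)]$ is nonnegative. Combined with the rearranged inequality above, we get $w_i[u_i(o_i)-u_i(o_j)]\geq 0$, and since $w_i>0$ this forces $u_i(o_i)\geq u_i(o_j)$, as desired. The second implication is the strict version: assuming $u_j(o_i) > u_j(o_j)$ makes the right-hand side \emph{strictly} positive, so $w_i[u_i(o_i)-u_i(o_j)] > 0$, and dividing by the positive weight $w_i$ yields the strict conclusion $u_i(o_i) > u_i(o_j)$. Both implications thus follow from the same displayed inequality, distinguishing only whether the weighted difference on the $j$ side is weakly or strictly positive.

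The argument is essentially immediate once Lemma \ref{lem:diffs} is invoked, so there is no serious obstacle; the only point requiring care is making sure the reduction to a single exchangeable pair is legitimate for $n$ agents. Here I would note that Lemma \ref{lem:diffs}(ii) quantifies over \emph{all} exchange-cycles, and in particular over the length-two cycle consisting of exactly $o_i$ and $o_j$ on agents $i$ and $j$; taking $x=2$, $\{a_1,a_2\}=\{i,j\}$, $(o_1,o_2)=(o_i,o_j)$ instantiates precisely the inequality we need, regardless of how many other agents are present. The positivity of $w_i$ (guaranteed by $w_i\in(0,1)$) is what lets us cancel it without reversing the inequality, and is the one hypothesis that must not be dropped.
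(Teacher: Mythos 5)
Your proof is correct and follows essentially the same route as the paper: both instantiate Lemma \ref{lem:diffs}(ii) with the length-two exchange-cycle on $\{i,j\}$, rearrange to $w_i[u_i(o_i)-u_i(o_j)] \geq w_j[u_j(o_i)-u_j(o_j)]$, and use positivity of the weights. If anything, your write-up states the direction of inference (nonnegativity of the $j$-side forces nonnegativity of the $i$-side) more cleanly than the paper's own wording.
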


\begin{proof}
By Lemma \ref{lem:diffs}, since $A$ is a $w$-maximal allocation, each exchange-cycle does not increase the sum of the matching. In particular, for $x=2$, if we define $a_1=i, a_2=j, o_1=o_i, o_2=o_j$, we have: 
\begin{align*}
    w_i u_i (o_i) + w_j u_j (o_j)
    \geq 
    w_i u_i (o_j) + w_j u_j (o_i)
\end{align*}
Which is equal to: 
\begin{align*}
    w_i [u_i (o_i) - u_i (o_j)]
    \geq 
    w_j [u_j (o_i) - u_j (o_j)]
\end{align*}
$w_i$ and $w_j$ are both positive, so if the left term is positive or non-negative, the right term must to be positive or non-negative too, respectively.  
\end{proof}

Lemma \ref{cases-lemma} implies that, in any exchangeable pair $o_i,o_j \in A_i, A_j$ in a $w$-maximal allocation, there are two cases:
(a) Both agents prefer the same item ($o_i$ or $o_j$);
(b) Agent $i$ prefers $o_i$ and agent $j$ prefers $o_j$.
In case (a), we say that the exchangeable pair has a \emph{preferred item}.

\begin{definition}
\label{def:preffered}
Consider a $w$-maximal allocation $A$ and an exchangeable pair  $o_i, o_j\in A_i, A_j$, for some $i,j\in [n]$. 
$o_i$ is called a \textit{preferred item} in the exchangeable pair $(o_i,o_j)$ if both $u_j(o_i)>u_j(o_j)$ and $u_i(o_i)>u_i(o_j)$.
\end{definition}

\begin{lemma}
\label{lem:envy}
For any $n$ agents,
in any $w$-maximal allocation $A$,
if an agent $j$ envies some agent $i$,
then there is an exchangeable pair $o_i, o_j\in A_i, A_j$, 
and $o_i$ is the preferred item.
\end{lemma}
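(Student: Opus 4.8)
The plan is to reduce the statement to a single category, then to a single pair of items, and finally to invoke Lemma~\ref{cases-lemma}. Since agent $j$ envies agent $i$, we have $u_j(A_i) > u_j(A_j)$, which by additivity I would rewrite as a sum over categories. The structural fact I would lean on is the preprocessing invariant: every bundle contains exactly $s_c$ items of each category $C_c$, so $A_i\cap C_c$ and $A_j\cap C_c$ have the same cardinality $s_c$ for every $c$.

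First I would decompose the envy across categories. Since $\sum_c\big(\sum_{o\in A_i\cap C_c} u_j(o) - \sum_{o\in A_j\cap C_c} u_j(o)\big) > 0$, at least one category $c$ must satisfy $\sum_{o\in A_i\cap C_c} u_j(o) > \sum_{o\in A_j\cap C_c} u_j(o)$. Working inside this category, I would extract a single dominating pair by contradiction: if there were no $o_i\in A_i\cap C_c$ and $o_j\in A_j\cap C_c$ with $u_j(o_i) > u_j(o_j)$, then every item of $A_i\cap C_c$ would have $u_j$-value at most that of every item of $A_j\cap C_c$, giving $\max_{o\in A_i\cap C_c} u_j(o) \le \min_{o\in A_j\cap C_c} u_j(o)$; multiplying by $s_c$ and using equal cardinalities would yield $\sum_{o\in A_i\cap C_c} u_j(o) \le \sum_{o\in A_j\cap C_c} u_j(o)$, contradicting the strict inequality. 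Hence such a pair $(o_i,o_j)$ exists with $u_j(o_i) > u_j(o_j)$, and since both items lie in $C_c$, it is an exchangeable pair.

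Finally, I would apply Lemma~\ref{cases-lemma} to this exchangeable pair: $u_j(o_i) > u_j(o_j)$ implies $u_i(o_i) > u_i(o_j)$, so both agents strictly prefer $o_i$, which is exactly the condition in Definition~\ref{def:preffered} for $o_i$ to be the preferred item.

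The only delicate step is the extraction of a single dominating pair from the aggregate inequality within a category; everything else is bookkeeping. That step depends crucially on the equal-cardinality property guaranteed by preprocessing --- were $A_i$ allowed to hold more items of $C_c$ than $A_j$, the $\max$/$\min$ argument would break down --- so I would be careful to invoke the preprocessing invariant explicitly.
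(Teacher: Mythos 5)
Your proposal is correct and follows essentially the same route as the paper's proof: decompose the envy $u_j(A_i) > u_j(A_j)$ over categories using the equal-cardinality invariant from preprocessing, locate a pair $o_i \in A_i$, $o_j \in A_j$ in a single category with $u_j(o_i) > u_j(o_j)$, and conclude via Lemma~\ref{cases-lemma} that $o_i$ is the preferred item. The only difference is that you spell out the max/min contradiction argument for extracting the dominating pair, a step the paper states without detail.
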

\begin{proof}
If $j$ envies $i$,
then $u_j(A_i)>u_j(A_j)$.
Since both $A_i$ and $A_j$ contain the same number of items in each category, there must be a category in which, for some item pair $o_i, o_j\in A_i, A_j$,
agent $j$ prefers $o_i$ to $o_j$.
By Lemma \ref{cases-lemma}, agent $i$ too prefers $o_i$ to $o_j$. So $o_i$ is a preferred item.
\end{proof}

\subsubsection{Maintaining the $w$-maximality}
The following lemma
shows that, by exchanging items, we can move from one $w$-maximal allocation to another $w'$-maximal allocation (for a possibly different weight-vector $w'$).
This lemma, too, works only for two agents.
\begin{lemma}
\label{lem:exchange}
Suppose there are $n=2$ agents.
Let $A$ be a $w$-maximal allocation, for $w=(w_1,w_2)$. Suppose there is an exchangeable pair $o_1, o_2\in A_1, A_2$ such that:
\begin{enumerate}
\item $u_2(o_1)>u_2(o_2)$, that is, $o_1$ is the preferred item.
\item \label{max-cond} Among all exchangeable pairs in which $o_1$ is the preferred item, this pair has a largest difference-ratio $r_{2/1}(o_1,o_2)$.
\end{enumerate}
Let $A'$ be the allocation resulting from exchanging $o_1$ and $o_2$ in $A$.
Then, $A'$ is $w'$-maximal for some $w' = (w_1',w_2')$ with  $w_1'\leq w_1, w_2'\geq w_2, w_1'\in(0,1), w_2'\in(0,1)$.
\end{lemma}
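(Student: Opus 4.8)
The plan is to exhibit the required weight explicitly: set $r^*:=r_{2/1}(o_1,o_2)$ and take $w'=\bigl(\tfrac{r^*}{1+r^*},\tfrac{1}{1+r^*}\bigr)$, so that $\rho':=w_1'/w_2'=r^*$; writing $\rho:=w_1/w_2$, I must check that $A'$ is $w'$-maximal, that $\rho'\le\rho$, and that $r^*\in(0,\infty)$. By Lemma~\ref{lem:diffs2} — equivalently, by inequality~\eqref{eq:diff2} divided through by the positive weight $w_2'$ — establishing $w'$-maximality of $A'$ reduces to the single scalar inequality
\[
(\star)\qquad r^*\bigl(u_1(c)-u_1(d)\bigr)\ \ge\ u_2(c)-u_2(d)
\]
for every exchangeable pair $(c,d)$ with $c\in A_1'$ and $d\in A_2'$. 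Two properties of $r^*$ will drive the argument: since $A$ is $w$-maximal and (via Lemma~\ref{cases-lemma}) $o_1$ is preferred by both agents, Lemma~\ref{lem:diffs2} gives $r^*\le\rho$; and condition~(2), read as saying that $r^*$ is the largest difference ratio among \emph{all} exchangeable pairs of $A$ that possess a preferred item, will bound the ratios of the dangerous pairs.

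First I would prove $(\star)$ for every pair $(c,d)$ that is already exchangeable in $A$, i.e.\ $c\in A_1$ and $d\in A_2$. Setting $X:=u_1(c)-u_1(d)$ and $Y:=u_2(c)-u_2(d)$, the $w$-maximality of $A$ yields $\rho X\ge Y$. If $X\le 0$, then $r^*X\ge\rho X\ge Y$ because $r^*\le\rho$ flips when multiplied by the nonpositive $X$; if $X>0$ and $Y\le 0$, then $r^*X>0\ge Y$; and if $X>0,Y>0$, then $c$ is a preferred item of $(c,d)$, so condition~(2) forces $Y/X=r_{2/1}(c,d)\le r^*$, i.e.\ $Y\le r^*X$. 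Thus $(\star)$ holds for all pairs of $A$, and in particular for the auxiliary pairs $(o_1,d)$ with $d\in A_2$ and $(c,o_2)$ with $c\in A_1$ that lie in the common category $C_c$ of $o_1,o_2$.

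Then I would obtain $(\star)$ for the pairs that the exchange creates by subtracting from these the defining identity $r^*\bigl(u_1(o_1)-u_1(o_2)\bigr)=u_2(o_1)-u_2(o_2)$. Subtracting it from $(\star)$ for $(o_1,d)$ gives $r^*\bigl(u_1(o_2)-u_1(d)\bigr)\ge u_2(o_2)-u_2(d)$, which is exactly $(\star)$ for the new pair $(o_2,d)$; subtracting it from $(\star)$ for $(c,o_2)$ gives $(\star)$ for the new pair $(c,o_1)$; and the swapped pair $(o_2,o_1)$ is the identity itself, the equality case of $(\star)$. Since every exchangeable pair of $A'$ is of one of these forms — unchanged, or involving the relocated $o_2\in A_1'$ or $o_1\in A_2'$ against the remaining items of $C_c$ — inequality $(\star)$ holds throughout, so $A'$ is $w'$-maximal. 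Finally $\rho'=r^*\le\rho$ gives $w_1'\le w_1$ and $w_2'\ge w_2$, and $r^*\in(0,\infty)$ (both $u_1(o_1)>u_1(o_2)$ and $u_2(o_1)>u_2(o_2)$) gives $w_1',w_2'\in(0,1)$.

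I expect the main obstacle to be precisely these newly created pairs: $o_2$, now held by agent~1, matched against agent~2's other items of $C_c$, and $o_1$, now held by agent~2, matched against agent~1's other items of $C_c$. They are not exchangeable pairs of $A$, so $A$'s optimality says nothing about them directly; the telescoping subtraction of the swapped pair's identity is what converts the known inequalities into the needed ones. The subtlety I would be careful about is that this step uses condition~(2) in its global form — $r^*$ dominates the difference ratio of every preferred pair, not merely of pairs sharing the item $o_1$ — and that a single-category example with three items shows $(\star)$, and hence the lemma, can fail under the weaker, $o_1$-fixed reading.
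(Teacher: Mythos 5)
Your proof is correct, and it establishes exactly what the paper's proof establishes, with the same skeleton: the same choice $w_1'/w_2'=r_{2/1}(o_1,o_2)$, the same preliminary facts ($u_1(o_1)>u_1(o_2)$ from Lemma~\ref{cases-lemma}, hence $r^*>0$ and $r^*\le w_1/w_2$ from Lemma~\ref{lem:diffs2}), and the same classification of the exchangeable pairs of $A'$ into unchanged pairs, the swapped pair $(o_2,o_1)$, and the two families of newly created pairs involving the relocated $o_1$ or $o_2$. Where you genuinely diverge is in how the new pairs are verified. The paper checks the three-case condition of Lemma~\ref{lem:diffs2}(ii) for each family by a sign-based case analysis on the relative order of $u_1(o_1^*),u_1(o_1),u_1(o_2)$ (and the corresponding $u_2$ values), leaning on the three-term ratio identities of Lemmas~\ref{lem:diffs_mul}, \ref{lem:diffs_div} and Observation~\ref{obs}. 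You instead collapse the maximality condition to the single linear inequality $(\star)$ --- equivalently $\phi(c)\ge\phi(d)$ for $\phi(o):=r^*u_1(o)-u_2(o)$ --- prove it once for every exchangeable pair of $A$ by a clean three-way split on the signs of $X$ and $Y$ (using condition~(2) only when both are positive, i.e.\ only for pairs whose $A_1$-item is preferred, which is precisely what the algorithm's item-pairs list maximizes over), and then obtain the new pairs by adding the equality $\phi(o_1)=\phi(o_2)$ to known instances of $(\star)$. This telescoping step replaces the paper's most delicate sub-case analysis (case~3 of its proof) and renders the appendix's technical lemmas unnecessary; it is essentially the algebraic form of the line-arrangement picture in Figure~\ref{fig:lines}. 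Your closing caution that condition~(2) must be read as maximality over all exchangeable pairs with a preferred $A_1$-item (across all categories), not just pairs containing the specific item $o_1$, is well taken --- that is indeed the reading the paper's proof and Algorithm~\ref{alg:main} use.
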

\begin{proofsketch}
The lemma can be proved by using Lemmas \ref{lem:diffs2}, \ref{cases-lemma}, the maximality condition in the lemma [condition \ref{max-cond}] and Definition \ref{diff_ratio_definition}.

The idea of the proof is to define $w_1',w_2'\in (0,1)$ such that $\frac{w_1'}{w_2'} = r_{2/1}(o_1,o_2), w_1'+w_2'=1$. Then, $0 < \frac{w_1'}{w_2'} \leq \frac{w_1}{w_2}$, and $w_1'\leq w_1, w_2'\geq w_2$.

Then we look at all the exchangeable pairs $(o_1^*,o_2^*)$ in the new allocation $A'$, resulting from the exchange, and show that they satisfy all the conditions of Lemma \ref{lem:diffs2}(ii) with $w_1',w_2'$, which are:
\begin{enumerate}[(a)] % (a), (b), (c), ...
\item $u_1(o_1^*) > u_1(o_2^*)$ and 
    $r_{2/1}(o_1,o_2) \geq r_{2/1}(o_1^*,o_2^*)$ or
\item $u_1(o_1^*) = u_1(o_2^*)$ and
    $u_2(o_2^*)\geq u_2(o_1^*)$ or
\item $u_1(o_1^*) < u_1(o_2^*)$ and
    $r_{2/1}(o_1,o_2) \leq r_{2/1}(o_1^*,o_2^*)$
\end{enumerate}

The exchangeable pairs in $A'$ can be divided into four types:
\begin{enumerate}
    \item The exchangeable pairs $(o_1^*,o_2^*)$ that have not moved. 
    \item The pair $(o_2,o_1)$.
    \item Pairs in the form $(o_1^*,o_1)$, $o_1^*\in A_1', o_1^* \neq o_2$.
    \item Pairs in the form $(o_2,o_2^*)$, $o_2^*\in A_2', o_2^* \neq o_1$.
\end{enumerate}

We show that each pair of each type satisfies its own condition out of (a), (b) and (c).
Therefore, by Lemma \ref{lem:diffs2}, $A'$ is $w'$-maximal allocation, for $(w_1',w_2')$.

The complete proof with all the technical arguments can be found in Appendix \ref{app:lem:exchange}.
\end{proofsketch}

\subsection{Algorithm for Two Agents}

Throughout this subsection we consider general mixed instances, for simplicity. 
By Lemma \ref{EF1equivalent}, for same-sign instances all the results hold with EF1 instead of EF[1,1]. 

Let us start with an intuitive description of the algorithm, for two agents.
Suppose that $w_2$ is a function of $w_1$, and consider the line $w_1+w_2=1, w_1\geq 0, w_2\geq 0$, which describes the collection of all pairs of non-negative weights $w_1, w_2\in [0,1]$ whose sum is 1.
Each point on this line represents a $w'$-maximal allocation, for some weight-vector $w'$.
% (there can be several $w'$-maximal allocations for the same $w'$; in this case we can just pick one arbitrarily).
In every such allocation, there are no envy-cycles in the envy graph, so there is at most one envious agent. 

The algorithm starts with an initial allocation which is a maximum-weight matching in the graph $G_w$, where $w=(0.5,0.5)$, corresponding to the center of the line.
This initial allocation is PO (By Lemma \ref{all-PO}) and EF for at least one agent. If it is EF for both agents then we are done. Otherwise, 
depending on the envious agent, the algorithm decides which side of the line to go to. If agent 2 envies, we need to improve 2's weight, so we go towards (0,1). If agent 1 envies, we need to go towards (1,0).
Therefore, as long as the allocation is not EF[1,1], the algorithm swaps an exchangeable pair chosen according to Lemma \ref{lem:exchange}, thus maintaining the search space as the space of the $w$-maximal allocations.
Note that since the items of the exchanged pair are both in the same category, the capacity constraints are also maintained. 
Lemma \ref{exists-PO-and-EF1} implies that
some point on the line gives a feasible EF[1,1] and PO division.

Specifically, the exchange pairs are determined as follows.
For each item $o$ we can define a linear function $f_o(w_1)$:
\begin{align*}
    w_1 u_1(o) - w_2 u_2(o) 
    =&
    w_1 u_1(o) - (1-w_1) u_2(o) \\
    =&
    w_1 u_1(o) - u_2(o) + w_1 u_2(o) \\
    =& 
    (u_1(o) + u_2(o)) w_1 - u_2(o)
\end{align*}

If we draw all those functions in one coordinate system, each pair of lines intersects at most once. 
In total there are $O(m^2)$ intersections, where $m= \sum_{c\in [k]}|C_c|$, the total number of items, in all categories (including the dummies).

For example, consider the same-sign instance $I=(N, M, C, S, U)$ where $N=[2], C=\{C_1,C_2\}$, $C_1=\{o_1,o_2,o_3,o_4\}, C_2=\{o_5,o_6\}$, $S=\{2,1\}$ and $U$ is shown in Table~\ref{tbl:util_example}. 
\begin{table}[t]
\caption{Utilities of the agents in the example.}
\label{tbl:util_example}
\begin{tabular}{||c|c c  c c |c c||} 
    \hline
    {} & {$o_1$} & {$o_2$} & {$o_3$}& {$o_4$}& {$o_5$} & {$o_6$}\\
\midrule
Agent 1 & \newtag{0}{u_1(o_11)} 
            & \newtag{-1}{u_1(o_12)} 
            & \newtag{-4}{u_1(o_13)} 
            & \newtag{-5}{u_1(o_14)} 
            & \newtag{0}{u_1(o_21)} 
            & \newtag{2}{u_1(o_22)} \\
    \hline
Agent 2 & \newtag{0}{u_2(o_11)} 
        & \newtag{-1}{u_2(o_12)}
        & \newtag{-2}{u_2(o_13)}
        & \newtag{-1}{u_2(o_14)}
        & \newtag{-1}{u_2(o_21)}
        & \newtag{0}{u_2(o_22)} \\ [1ex] 
    \hline
\end{tabular} 
\end{table}
The corresponding lines for the items are depicted in Figure~\ref{fig:lines}.
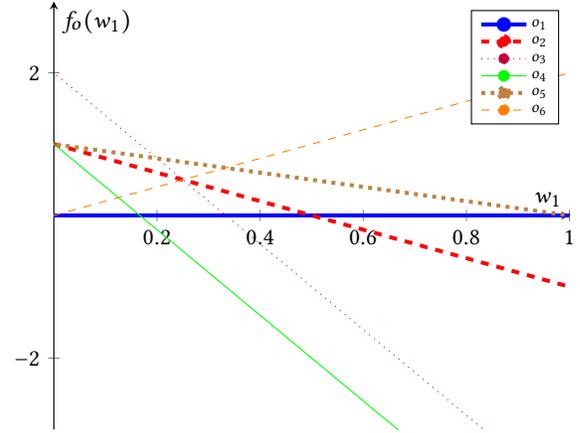
\begin{figure}[tb]
\begin{center}
\begin{tikzpicture}
\begin{axis}[xlabel=$w_1$,ylabel=$f_o(w_1)$,
xmin=0,xmax=1,ymin=-3,ymax=3, axis lines=center,
legend style={nodes={scale=0.7, transform shape}}, legend image post style={mark=*}]

\addplot[domain=-7:7, color=blue, line width=1.5pt] 
{(\getrefnumber{u_1(o_11)} + \getrefnumber{u_2(o_11)})*x - \getrefnumber{u_2(o_11)}};
\addlegendentry{$o_1$}

\addplot[domain=-7:7, color=red, line width=1.5pt, dashed]
{(\getrefnumber{u_1(o_12)} + \getrefnumber{u_2(o_12)})*x - \getrefnumber{u_2(o_12)}};
\addlegendentry{$o_2$}

\addplot[domain=-7:7, color=purple, dotted]
{(\getrefnumber{u_1(o_13)} + \getrefnumber{u_2(o_13)})*x - \getrefnumber{u_2(o_13)}};
\addlegendentry{$o_3$}

\addplot[domain=-7:7, color=green]
{(\getrefnumber{u_1(o_14)} + \getrefnumber{u_2(o_14)})*x - \getrefnumber{u_2(o_14)}};
\addlegendentry{$o_4$}

\addplot[domain=-7:7, color=brown, line width=1.5pt, dotted]
{(\getrefnumber{u_1(o_21)} + \getrefnumber{u_2(o_21)})*x - \getrefnumber{u_2(o_21)}};
\addlegendentry{$o_5$}

\addplot[domain=-7:7, color=orange, dashed]
{(\getrefnumber{u_1(o_22)} + \getrefnumber{u_2(o_22)})*x - \getrefnumber{u_2(o_22)}};
\addlegendentry{$o_6$}

\end{axis}
\end{tikzpicture}
\end{center}
\caption{The corresponding lines for the items in the example.}
\label{fig:lines}
\Description{A coordinate system with $w_1$ as the x-axis and $f_o(w_1)$ as the y-axis, containing 6 lines, one for each chore from the above example.}
\end{figure}
The meaning of each point of intersection is a possible switching point for these two items between the agents.
Clearly, the replacement will only take place between exchangeable pairs, i.e. items in the same category, which are in different agents' bundles at the time of the intersection.
According to Definition \ref{diff_ratio_definition}, at each intersection point of the lines of $o_1$ and $o_2$, $\frac{w_1}{w_2} = \frac{u_2(o_1)-u_2(o_2)}{u_1(o_1)-u_1(o_2)} = r_{2/1}(o_1, o_2)$ holds.  
The largest $r$ value is obtained on the right side of the graph, and as we progress to the left side its value decreases.

In this example, the algorithm starts with the allocation $A=(A_1,A_2)$ in the point $(0.5,0.5)$, which is $A_1=\{o_1,o_2,o_6\}, A_2=\{o_3,o_4,o_5\}$.
Note that for each category, 1's items are the top lines.
In this initial allocation, 2 envies by more than one item, so we start exchanging items in order to increase $w_2$.
The first intersecting pair (when we go left) is $o_5,o_6$. It is an exchangeable pair, so we exchange it and update the allocation to $A_1=\{o_1,o_2,o_5\}, A_2=\{o_3,o_4,o_6\}$. This is an EF1 allocation, so we are done.

If at some point there are multiple intersections of exchangeable pairs, we swap the pairs in an arbitrary order.

\begin{algorithm}[tb]
\caption{Finding an EF[1,1] and PO division for two agents}
\label{alg:main}
\begin{algorithmic}[1]
\Statex{// Step 1: Find a $w$-maximal feasible allocation that is EF for some agent.}
\State \label{state:first-allocation} $A=(A_1,A_2) \gets$ a $w$-maximal allocation, for $w_1 = w_2 = 0.5$.
% Such allocation can be found efficiently by Lemma \ref{lem:envy-free}.
\If{$A$ is EF[1,1]}
    \State \Return $A$
\EndIf
\If{$A$ is EF for agent 2}
    \State replace the names of agent 1 and agent 2
\EndIf
\Statex{// We can now assume that agent 2 is jealous.}
\Statex{// Step 2: Build a set of item-pairs whose replacement increases agent 2's utility:}
\State \label{state:find-item_pairs} item-pairs $\gets$ all the exchangeable pairs $o_1, o_2\in A_1,A_2$, for which $u_2(o_1)>u_2(o_2)$. %and $r(o_1,o_2) \leq 1$.
\State \label{state:find-current-pair} current-pair $\gets$ $(o_1,o_2)$ where $r_{2/1}(o_1,o_2)$ is maximal.
\Statex{// Step 3: Switch items in order until an EF[1,1] allocation is found:} 
\While{$A=(A_1,A_2)$ is not EF[1,1]}
    \State Switch current-pair between the agents.
    \State Update item-pairs list and current-pair (Steps \ref{state:find-item_pairs}, \ref{state:find-current-pair}).
\EndWhile
\State \Return $A$
\end{algorithmic}
\end{algorithm}

\begin{lemma}
\label{lem:first-and-last}
If Algorithm \ref{alg:main} exchanges the last exchangeable pair in the item-pairs list (that is initialized in step \ref{state:find-item_pairs}), then the resulting allocation is envy-free for agent 2.
\end{lemma}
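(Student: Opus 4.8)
The plan is to reduce the statement to the contrapositive of Lemma~\ref{lem:envy}. Write $L$ for the item-pairs list initialized in step~\ref{state:find-item_pairs}; by construction its elements are exactly the exchangeable pairs $(o_1,o_2)$ with $o_1\in A_1$, $o_2\in A_2$ in a common category and $u_2(o_1)>u_2(o_2)$, i.e.\ precisely the pairs through which agent~2 could gain value. I would first show that once the algorithm has exchanged the last pair drawn from $L$, the resulting allocation admits \emph{no} such preferred exchangeable pair at all. Granting this, Lemma~\ref{lem:envy} applied with $j=2$ and $i=1$ gives, in contrapositive form, that agent~2 does not envy agent~1, which is exactly envy-freeness for agent~2.

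The heart of the proof is thus to establish that no preferred exchangeable pair survives. I would first record the weight bookkeeping: by Lemma~\ref{lem:exchange} each exchange sends a $w$-maximal allocation to a $w'$-maximal one whose ratio $w_1'/w_2'$ equals the difference ratio of the exchanged pair and satisfies $w_1'/w_2'\le w_1/w_2$, so the exchanges are carried out in non-increasing order of difference ratio and every intermediate allocation stays $w$-maximal. Next, by Definition~\ref{def:preffered} and Lemma~\ref{cases-lemma}, every exchanged pair hands agent~2 its \emph{preferred} item, whose $u_2$-value strictly exceeds that of the item agent~2 relinquishes; hence $u_2(A_2)$ strictly increases at every step, and agent~2 gives up an item only in return for a strictly $u_2$-larger one. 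The useful consequence I would extract from this is a retention property: the items agent~2 values most migrate into $A_2$ and are never traded away.

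Finally I would argue by contradiction, aiming at the stronger statement that after the last exchange from $L$ agent~2 holds, in every category $C_c$, the $s_c$ items of largest $u_2$-value; since $|A_1\cap C_c|=|A_2\cap C_c|=s_c$ this yields $u_2(A_2\cap C_c)\ge u_2(A_1\cap C_c)$ for every $c$ and hence $u_2(A_2)\ge u_2(A_1)$, and it also rules out any preferred exchangeable pair. If instead some top-$s_c$ item $t$ of $C_c$ were still held by agent~1 while a lower-valued item sat with agent~2, a counting argument (agent~2 can have received at most $s_c-1$ items from agent~1 while $t$ remains with agent~1, so it must keep at least one of its original items $b'$ with $u_2(t)>u_2(b')$) would exhibit a pair $(t,b')$ that belongs to the initial list $L$, was never exchanged, and is a valid candidate with difference ratio at most the current $w_1/w_2$; because the exchanges proceed in non-increasing ratio order this contradicts having already processed the last pair of $L$. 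The step I expect to be the main obstacle is exactly this one: making the counting fully rigorous when items may leave $A_2$ and later re-enter it, which is precisely where the retention property must be invoked to guarantee that $t$ and $b'$ can be taken to be original members of agent~1's and agent~2's bundles, so that the surviving inversion genuinely traces back to an unexchanged element of $L$.
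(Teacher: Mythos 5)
Your opening move --- applying the contrapositive of Lemma \ref{lem:envy} once you know that no exchangeable pair with a preferred item in $A_1$ survives --- is exactly the paper's proof, and it is essentially the \emph{whole} proof. The gap is in how you try to establish that no such pair survives. No analysis of the dynamics is needed: the item-pairs list is recomputed at every iteration (the update step of the while-loop re-runs step \ref{state:find-item_pairs}), and by Lemma \ref{cases-lemma} and Definition \ref{def:preffered} it is, in any $w$-maximal allocation, \emph{exactly} the set of exchangeable pairs whose preferred item lies in $A_1$. ``Exchanging the last pair in the list'' therefore means this set becomes empty, which is precisely the hypothesis of the contrapositive of Lemma \ref{lem:envy}. (The only point worth checking is that swapping the sole remaining pair $(o_1,o_2)$ cannot create a new list entry: a new pair $(o_1^*,o_1)$ would require $u_2(o_1^*)>u_2(o_1)>u_2(o_2)$, so $(o_1^*,o_2)$ would already have been a second entry before the swap, a contradiction; symmetrically for $(o_2,o_2^*)$.) Note also that your target claim --- agent 2 ends up holding the top $s_c$ items of each category by $u_2$-value --- is just a restatement of ``the list is empty,'' so your counting argument is attempting to re-derive the lemma's hypothesis from the algorithm's history.

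Worse, the ``retention property'' on which that counting argument rests --- that the items agent 2 values most migrate into $A_2$ and are never traded away --- is not a valid invariant. Each exchange only guarantees that agent 2 surrenders some item for a strictly $u_2$-better one; agent 2 can therefore give up one of its top-$s_c$ items in exchange for an even better item of the same category, sending the surrendered top item back to $A_1$. Nothing in Lemma \ref{lem:exchange} or in the maximal-difference-ratio selection rule forbids this, so the step you yourself flag as ``the main obstacle'' is a genuine hole rather than a technicality. Dropping the dynamical argument entirely and using the static characterization of the list closes the proof.
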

\begin{proof}
After the last exchange, there is no exchangeable pair $(o_1,o_2), o_1,o_2\in A_1,A_2$ for which $o_1$ is the preferred item. Therefore, by Lemma \ref{lem:envy}, agent 2 is not jealous.
\end{proof}

\begin{theorem}
Algorithm \ref{alg:main} always returns an allocation that is $w$-maximal with positive weights (and thus PO), and satisfies the capacity constraints.
The allocation is EF[1,1], and EF1 for a same-sign instance.
\end{theorem}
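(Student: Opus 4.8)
The plan is to analyze Algorithm~\ref{alg:main} as generating a finite sequence of allocations $A^1, A^2, \ldots$, where $A^1$ is the initial $(0.5,0.5)$-maximal allocation (after the possible renaming in Step~1 that makes agent~2 the jealous one) and $A^{t+1}$ is obtained from $A^t$ by swapping the current-pair. I would prove the four assertions of the theorem separately: that the while loop terminates so that an allocation is in fact returned; that every allocation in the sequence is $w$-maximal with positive weights (hence PO by Proposition~\ref{all-PO}); that every allocation is feasible; and that the returned allocation is EF[1,1]. The EF1 guarantee for same-sign instances then follows immediately from Lemma~\ref{EF1equivalent}.

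I would establish the structural invariants by induction on $t$. The base case $A^1$ is $(0.5,0.5)$-maximal by construction. For the inductive step, I first observe that in a $w$-maximal allocation the pairs collected in Step~\ref{state:find-item_pairs} (those with $u_2(o_1) > u_2(o_2)$) are exactly the exchangeable pairs whose preferred item is $o_1$: this is precisely Lemma~\ref{cases-lemma}, which upgrades $u_2(o_1) > u_2(o_2)$ to $u_1(o_1) > u_1(o_2)$. Hence the current-pair chosen in Step~\ref{state:find-current-pair} satisfies both hypotheses of Lemma~\ref{lem:exchange} (preferred item, maximal difference-ratio), so the swap yields a $w'$-maximal allocation with $w_1' \le w_1$, $w_2' \ge w_2$ and $w_1', w_2' \in (0,1)$. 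Feasibility is preserved because an exchangeable pair lies in a single category, so the swap leaves the number of items of each category in each bundle unchanged. For termination, I would use that each swap moves the preferred item $o_1$ (with $u_2(o_1) > u_2(o_2)$) into agent~2's bundle in exchange for $o_2$, so $u_2(A_2^t)$ strictly increases along the sequence; since there are only finitely many feasible allocations, no allocation repeats and the loop halts.

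For the fairness guarantee I would invoke Lemma~\ref{exists-PO-and-EF1}. Consider the maximal run of the swapping process, continued until no exchangeable pair has $o_1$ preferred, and call its last allocation $A^x$. By Lemma~\ref{lem:first-and-last} (equivalently Lemma~\ref{lem:envy}), $A^x$ is EF for agent~2, while $A^1$ is EF for agent~1 because the renaming guarantees agent~2 is the jealous agent. All the $A^t$ are $w$-maximal with positive weights, and consecutive ones differ by a single exchangeable-pair swap, so the hypotheses of Lemma~\ref{exists-PO-and-EF1} hold and some $A^{j^*}$ in the sequence is PO and EF[1,1]. Since the while loop returns the first allocation it encounters that is EF[1,1], it then suffices to show the loop actually reaches this allocation.

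The step I expect to be the main obstacle is exactly this last point: verifying that the loop is well-defined, i.e., that whenever the body runs (the current $A^t$ is not EF[1,1]) the list item-pairs is non-empty so that a current-pair exists. I would argue that for every $t$ strictly before the first EF[1,1] index $j^*$, agent~2 envies agent~1 in $A^t$; Lemma~\ref{lem:envy} then supplies a preferred pair, so item-pairs is non-empty. To prove this envy claim I would track the two signed envy gaps $u_2(A_2^t) - u_2(A_1^t)$ and $u_1(A_1^t) - u_1(A_2^t)$ and show, using the already-established $u_1(o_1) > u_1(o_2)$ and $u_2(o_1) > u_2(o_2)$ for each swapped pair, that the first is strictly increasing and the second strictly decreasing in $t$. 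Consequently agent~2 envies on an initial segment of indices and agent~1 envies on a final segment; since every $A^t$ is PO, these two segments cannot overlap, so the first EF[1,1] allocation occurs no later than the moment agent~2 stops envying. This places $j^*$ within the segment where agent~2 is the jealous agent, confirming item-pairs is non-empty for all $t < j^*$ and hence that the algorithm reaches and returns $A^{j^*}$. Finally, Lemma~\ref{EF1equivalent} converts the EF[1,1] guarantee into EF1 for same-sign instances.
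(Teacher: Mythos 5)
Your proof follows the paper's argument essentially verbatim: the same chain of results (Proposition~\ref{all-PO} and Lemmas~\ref{cases-lemma}, \ref{lem:exchange}, \ref{lem:envy}, \ref{lem:first-and-last}, \ref{exists-PO-and-EF1}, \ref{EF1equivalent}) used in the same roles. The only difference is that you additionally spell out termination of the while loop and the non-emptiness of the item-pairs list at every iteration --- two points the paper's proof leaves implicit --- and both of your arguments for them are sound.
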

\begin{proof}
A matching in $G_w$ graph always gives each agent $s_c$ items of category $C_c$. Thanks to the dummy items, all possible allocations that satisfy the capacity constraints can be obtained by a matching.
The first allocation that the algorithm checks is some $w$-maximal allocation, where $w=(w_1,w_2), w_1,w_2\in (0,1)$, so by Proposition \ref{all-PO}, this is a PO allocation.
At each iteration, it exchanges an exchangeable pair, $(o_1,o_2)$, such that $u_2(o_1)>u_2(o_2)$, and among all the exchangeable pairs with $u_2(o_1)>u_2(o_2)$ it has the largest $r_{2/1}(o_1,o_2)$, so by Lemma \ref{lem:exchange}, the resulting allocation is also $w'$-maximal for some $w'=(w_1',w_2'), w_1',w_2'\geq 0$.
%, and thus Pareto optimal.
In addition, since the items are in the same category, the allocation remains feasible.
The first allocation in the sequence is, by step \ref{state:first-allocation}, envy-free for agent 1.
By Lemma \ref{lem:first-and-last}, the last allocation in the sequence is 
envy-free for agent 2.
So by Lemma \ref{exists-PO-and-EF1}, there exists some iteration in which the allocation is PO and EF[1,1], and EF1 for a same-sign instance.
\end{proof}

\balance

\begin{theorem}
\label{runtime}
    The runtime of Algorithm \ref{alg:main} is $O(m^4)$.
\end{theorem}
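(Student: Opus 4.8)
The plan is to decompose the runtime into three parts: the one-time cost of computing the initial allocation in Step~\ref{state:first-allocation}, the number of iterations of the while-loop, and the cost of a single iteration; then add and multiply appropriately.

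First I would bound the one-time cost of Step~\ref{state:first-allocation}. Computing a $w$-maximal allocation amounts to solving a maximum-weight matching problem in the bipartite graph $G_w$, which has $m$ vertices on each side and $O(m^2)$ edges; this can be done in $O(m^3)$ time by a standard matching algorithm. Since it is incurred only once, it is dominated by the bound obtained for the loop below and contributes nothing beyond $O(m^4)$.

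The crux is bounding the number of iterations of the while-loop. Each iteration exchanges exactly one exchangeable pair, and geometrically this exchange occurs precisely at an intersection point of two of the lines $f_o(w_1)$. Since each of the $m$ lines crosses any other at most once, there are only $O(m^2)$ such intersection points. By Lemma~\ref{lem:exchange}, every exchange weakly decreases $w_1$ (equivalently, the difference-ratio $r_{2/1}$ of the chosen pair), so the algorithm sweeps monotonically in a single direction along the segment $w_1+w_2=1$ and therefore crosses each intersection point at most once. Consequently the loop performs at most $O(m^2)$ exchanges. This monotonicity argument --- ensuring that no intersection is revisited and hence no pair is exchanged twice --- is the main obstacle, and the step I would spell out most carefully.

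Finally I would bound the work performed inside one iteration. Rebuilding the item-pairs list in Step~\ref{state:find-item_pairs} scans all $O(m^2)$ candidate exchangeable pairs $o_1,o_2\in A_1,A_2$, computing each difference-ratio in $O(1)$, and selecting the maximal-ratio pair in Step~\ref{state:find-current-pair} is another $O(m^2)$ pass; testing the EF[1,1] condition requires only $O(m)$ work, since for each category we locate the best chore in $A_i$ and the best good in $A_j$ and compare the adjusted bundle utilities. Hence each iteration costs $O(m^2)$. Multiplying the $O(m^2)$ iterations by the $O(m^2)$ per-iteration cost yields $O(m^4)$ for the loop, which dominates the initial matching, establishing the claimed $O(m^4)$ bound.
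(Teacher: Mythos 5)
Your proposal is correct and follows essentially the same decomposition as the paper: an $O(m^3)$ initial matching, at most $O(m^2)$ loop iterations bounded by the number of line intersections, and $O(m^2)$ work per iteration, giving $O(m^4)$. The only cosmetic difference is that you rebuild the item-pairs list from scratch each iteration rather than updating it incrementally as the paper does, but since finding the maximal-ratio pair already costs $O(m^2)$ per iteration in both versions, the bound is unchanged.
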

\begin{proof}
Step 1 can be done by finding a maximum weighted matching in a bipartite graph $G_w$. Its time complexity is $O(|V|)^3$ (\citet{fredman1987}), where $|V|=2m$, the number of vertices in the graph.
Thus, $O(m^3)$ is the time complexity of step 1.

At step 2 we go through all the categories $c\in [k]$, at each we create groups $A_{1,c},A_{2,c}$ which contain agent 1's and agent 2's items from $C_c$ in $A$.
It can be done in $\frac{m}{2}|C_c|=ms_c$. 
Now we have $|A_{1,c}|=|A_{2,c}|=s_c$.
Then, we iterate over all the pairs $o_1,o_2\in A_{1,c},A_{2,c}$, and add them to the list, which takes $s_c^2$ time.
In total, building item-pairs list is 
$\sum_{c\in [k]}(ms_c+s_c^2) = O(\sum_{c\in [k]}ms_c) = O(km^2)$.
The item-pairs list size is $\sum_{c\in [k]}s_c^2=O(m^2)$, and then finding its maximum takes $O(m^2)$.
In total, step 2 takes $O(km^2)$ time.

The upper bound on the number of iterations in the while loop at step 3 is the number of intersection points between items, which is at most $O(m^2)$.
At each iteration we switch one exchangeable pair, $(o_1,o_2)$, and update the pairs-list. The only pairs that should be updated (deleted or added) are those that contain $o_1$ or $o_2$. There are at most $2 m=O(m)$ such pairs.
Finding the maximum is $O(m^2)$.
In total, step 3 takes $O(m^4)$ time.

Overall, the time complexity of the algorithm is $O(m^4)$ (because $m\geq k$ necessarily).
\end{proof}

\section{Conclusion and Future Work}
We presented the first algorithm for efficient nearly-fair allocation of mixed goods and chores with capacity constraints.
We believe that our paper provides a good first step in understanding fair division of mixed resources under cardinality constraints. 
Our proofs are modular, and some of our lemmas can be used in more general settings.

\subsection{Three or More Agents}
The most interesting challenge is to generalize our algorithm to three or more agents. 
Proposition \ref{all-PO} and Lemmas \ref{both-EF1-for-someone}, \ref{lem:diffs}, \ref{cases-lemma}, \ref{lem:envy}
work for any number of agents, but the other lemmas currently work only for two agents.

Algorithm \ref{alg:main} essentially scans the space of $w$-maximal allocations: it starts with one $w$-maximal allocation, and then moves in the direction that increases the utility of the envious agent.
To extend it to $n$ agents, we can similarly start with a $w$-maximal allocation corresponding to $w = (1/n,\ldots,1/n)$, i.e., identical weights for each of the agents. 
These weights represent a point in an $n$-dimensional space.
Then, we can exchange items to benefit 
an envious agent, in order to increase their weight and improve their utility. In case there are several envious agents, we can select one that is at the “bottom” of the envy chain. For example, in the SWAP algorithm of \citet{biswas2018}, the swap is done in a way that benefits the envious agent with the smallest utility. Similarly, in the envy-graph algorithm of \citet{lipton2004}, the next item is given to an agent with no incoming edges in the envy-graph (an agent who is not envied by any other agent). 
The exchanges should be done in an order that preserves the $w$-maximality and ensures we reach an EF[1,1] allocation.
The two main Lemmas that should be extended to ensure the above two conditions are Lemma \ref{lem:exchange} and Lemma \ref{exists-PO-and-EF1}.
We have not yet been able to develop such a method and prove its correctness.
Finding an EF1+PO allocation for $n=3$ agents seems hard even when there is a single category with only goods.

\subsection{More General Constraints}
Another possible generalization is to more general constraints.
Capacity constraints are a special case of \emph{matroid constraints},
by which each bundle should be an independent set of a given matroid (see  \cite{biswas2018} for the definitions). 
Lemmas \ref{all-PO}, \ref{both-EF1-for-someone},
\ref{exists-PO-and-EF1},
\ref{cases-lemma}
and \ref{lem:exchange} 
do not use categories, and should work for general matroids. The other lemmas should be adapted.

Finally, we assumed that both agents have the same capacity constraints. We do not know if our results can be extended to agents with different capacity constraints (e.g. agent 1 can get at most 7 items while agent 2 can get at most 3 items). Specifically, the proof of Lemma \ref{both-EF1-for-someone} does not work --- if $(A_1,A_2)$ is feasible, then $(A_2,A_1)$ might be infeasible.

\newpage 

%%%%%%%%%%%%%%%%%%%%%%%%%%%%%%%%%%%%%%%%%%%%%%%%%%%%%%%%%%%%%%%%%%%%%%%%

%%% The acknowledgments section is defined using the "acks" environment
%%% (rather than an unnumbered section). The use of this environment 
%%% ensures the proper identification of the section in the article 
%%% metadata as well as the consistent spelling of the heading.

\begin{acks}
This research has been partly supported by the Ministry of Science, Technology \& Space (MOST), Israel.
\end{acks}

%%%%%%%%%%%%%%%%%%%%%%%%%%%%%%%%%%%%%%%%%%%%%%%%%%%%%%%%%%%%%%%%%%%%%%%%

%%% The next two lines define, first, the bibliography style to be 
%%% applied, and, second, the bibliography file to be used.

\bibliographystyle{ACM-Reference-Format} 
\bibliography{sample}

\clearpage
\appendix
\section*{Appendix}
\section{Methods that Do Not Work}
\label{app:dont-work}
In this section, we present some of our attempts to find an EF1 and PO allocation for an instance with chores and capacity constraints, using ideas from previous works.
These attempts failed. This shows that the problem is not trivial, and the new tools that have been developed in this paper are required.

\subsection{Iterated Matching Algorithm}
The Iterated Matching algorithm, presented by \citet{brustle2019}, finds an EF1 allocation of indivisible goods in the case of additive valuations.
This is done by using the \emph{valuation graph}, which is a complete bipartite graph on vertex sets $I$ ($n$ agents) and $J$ ($m$ goods), with weights that represent the agents' utilities.
The algorithm proceeds in rounds where each agent is matched to exactly one item in each round, by a maximum weighted matching on a sub graph of all the remaining goods, until all items have been allocated.

This algorithm can be easily applied to chores with one category, by adding at the beginning some $k$ dummy chores (with utility of 0 to each agent), where $|J|=an-k$, and $a\in \mathds{N}, k\in \{0,...,n-1\}$. 

However, with more than one category, we need to add an external loop that runs on all categories, and at each iteration executes the algorithm for chores.
While it maintains capacity constraints (because in each iteration all the agents get chores, similarly to round robin procedure), it may not necessarily maintain the EF1 requirement, as we show in the following example. 

Denote by $o_{i,j}$ the $j$-th item of category $i$. 
Table \ref{tab:iterated-matching} presents the utilities of the agents over the items.

\begin{table}[]
    \caption{Iterated matching algorithm counterexample}
    \label{tab:iterated-matching}
    \centering
    \begin{tabular}{||c| c c | c c||} 
    \hline
    {} & {$o_{1,1}$} & {$o_{1,2}$} & {$o_{2,1}$} & {$o_{2,2}$} \\ [0.5ex] 
    \hline\hline
    Agent 1 & 0 & -2 & -2 & -1 \\
    \hline
    Agent 2 & 0 & -4 & -4 & 0 \\
    \hline
    \end{tabular} 
\end{table}

After allocating category $C_1$, the allocation is $A_1=\{o_{1,2}\}, A_2=\{o_{1,1}\}$, so $u_1(A_1)=-2, u_1(A_2)=0, u_2(A_1)=-4, u_2(A_2)=0$, then agent 1 envies 2 up to one item (her only item), and agent 2 is not jealous.

Then we allocate the second category, which changes the allocations to: $A_1=\{o_{1,2},o_{2,1}\}, A_2=\{o_{1,1},o_{2,2}\}$. Now $u_1(A_1)=-4, u_1(A_2)=-1$, so agent 1 envies by more than one item (her worst chore has utility of $-2$).

If there was always an agent who was not jealous, we could have assigned her the new chore, but this is not guaranteed. An envy-cycle may be created, and we know that envy-cycle elimination algorithm may fail EF1 for additive chores [according to \citet{vaish2020}].

\subsection{Top-trading Envy Cycle Elimination Algorithm}
\citet{vaish2020} considered  fair allocation of chores, and suggested to use cycle elimination on the \emph{top-trading graph}, instead of the usual envy-graph.
The top-trading graph for a division $A$ is a directed graph on the vertex set $N$, with a directed edge from agent $i$ to agent $k$ if $u_i(A_k) = \max_{j \in N} u_i(A_j)$ and $u_i(A_k) > u_i(A_i)$, i.e. $A_k$ is the most preferred bundle for agent $i$ in $A$, and she strictly prefers $A_k$ over her own bundle.
In their paper \cite{vaish2020}, they show that resolving a top-trading envy cycle preserves EF1. Indeed, every agent involved in the top-trading exchange receives its most preferred bundle after the swap, and therefore does not envy anyone else in the next round.
They also define a \emph{sink agent} as an agent with no out-going edges in the envy graph,
that is, an agent who does not envy anybody.
In addition, they prove that if the usual envy-graph does not have a sink, then the top-trading envy graph has a cycle. 

In their algorithm, for each chore, they construct the envy-graph. If there is no sink in it, they eliminate cycles on the top-trading envy-graph, which guarantees the existence of a sink agent in the envy graph, and then allocate the chore to a sink agent.

This method does not work in the setting with capacity constraints, because we can not simply assign the new chore to the sink agent, because she may have reached the maximum allowed number of chores from this category.

For example, consider an instance with two categories, $C_1=\{o_{1,1}, o_{1,2}, o_{1,3}, o_{1,4}\}$ and $C_2=\{o_{2,1}, o_{2,2}\}$, with capacity constraints $S=\{2,1\}$, and utility functions presented in Table \ref{tab:top-trading}.

\begin{table}[]
    \caption{Top trading algorithm counterexample}
    \label{tab:top-trading}
    \centering
    \begin{tabular}{||c | c c c c | c c||} 
    \hline
    {} & {$o_{1,1}$} & {$o_{1,2}$} & {$o_{1,3}$} & {$o_{1,4}$} & {$o_{2,1}$} & {$o_{2,2}$} \\ [0.5ex] 
    \hline\hline
    Agent 1 & -1 & 0 & 0 & 0 & -2 & -4 \\
    \hline
    Agent 2 & -1 & 0 & 0 & 0 & -1 & -3 \\
    \hline
    \end{tabular} 
\end{table}
At the beginning of the algorithm, both allocations are empty, so agent 1 and agent 2 are both sinks. 
Say that agent 1 was selected to get $o_{1,1}$, and now $A_1=\{o_{1,1}\}, A_2=\emptyset$.
Now agent 1 is jealous, so the only sink agent is 2, and $o_{1,2}$ is allocated to 2.
Since $u_2(o_{1,2})=u_2(o_{1,3})=0$, agent 2 remains sink in the two following iterations. Then, the new allocations are $A_1=\{o_{1,1}\}, A_2=\{o_{1,2},o_{1,3}\}$, and the only sink is agent 2. 
According to the algorithm, we should assign $o_{1,4}$ to agent 2, but this violates the capacity constraints. 

\subsection{Greedy Round-robin with Cycle Elimination}

\citet{biswas2018} solved the problem of allocating goods under capacity constraints. 
Their algorithm first determines an arbitrary ordering of the n agents, $\sigma$, and then for each category: uses the Greedy Round-Robin algorithm to allocate the goods of this category, eliminate the cycles on the envy-graph, and update $\sigma$ to be a topological ordering of the envy-graph. 

As already mentioned, this algorithm will not work for chores, because eliminating cycles in the usual envy-graph may violate EF1 \cite{vaish2020}.

In addition, if we use the top-trading graph instead, 
we can not use it to determine the topological ordering, since this ordering should be based on the envy-graph.
It is possible that the top-trading graph is cycle-free, while the envy-graph has cycles.
% , even though it has no sinks.

For example, consider an instance with 4 agents, one category with 4 items, $C_1=\{o_1,o_2,o_3,o_4\}$, a capacity constraint of 1, and the utilities presented in Table \ref{tab:CycleElimination}. 

\begin{table}[]
    \caption{Cycle elimination algorithm counterexample}
    \label{tab:CycleElimination}
    \centering
    \begin{tabular}{||c | c c c c ||} 
    \hline
    {} & {$o_1$} & {$o_2$} & {$o_3$} & {$o_4$} \\ [0.5ex] 
    \hline\hline
    Agent 1 & -5 & -3 & -7 & -7 \\
    \hline
    Agent 2 & -5 & -2 & -1 & -4 \\ 
    \hline
    Agent 3 & -4 & -7 & -6 & -1 \\ 
    \hline
    Agent 4 & -3 & -3 & -2 & -1 \\ 
    \hline
    \end{tabular} 
\end{table}

Consider the allocation $A=(A_1,A_2,A_3,A_4)$, where $A_1=\{o_1\},A_2=\{o_2\},A_3=\{o_3\},A_4=\{o_4\}$. 
The envy graph of allocation $A$ is  shown in Figure \ref{fig:envy-graph}, and its top-trading graph is shown in Figure \ref{fig:top-trading-graph}.

\begin{figure}[!htb]
\begin{minipage}{0.20\textwidth}
    \centering
    \caption{Envy-graph}
    \label{fig:envy-graph}
    \begin{tikzpicture}[node distance={15mm}, main/.style = {draw, circle}]
    \node[main] (1) {$1$}; 
    \node[main] (2) [above right of=1] {$2$}; 
    \node[main] (3) [below right of=1] {$3$}; 
    \node[main] (4) [above right of=3] {$4$}; 
    \draw[->] (1) -- (2);
    \draw[->] (2) -- (3);
    \draw[->] (3) -- (1);
    \draw[->] (3) -- (4);
    \end{tikzpicture} 
\end{minipage}\hfill
\begin{minipage}{0.22\textwidth}
    \centering
    \caption{Top-trading graph}
    \label{fig:top-trading-graph}
    \begin{tikzpicture}[node distance={15mm}, main/.style = {draw, circle}]
    \node[main] (1) {$1$}; 
    \node[main] (2) [above right of=1] {$2$}; 
    \node[main] (3) [below right of=1] {$3$}; 
    \node[main] (4) [above right of=3] {$4$}; 
    \draw[->] (1) -- (2);
    \draw[->] (2) -- (3);
    \draw[->] (3) -- (4);
    \end{tikzpicture}
   \end{minipage}
\end{figure}

Note that the top-trading graph is cycle-free, while the envy-graph has a cycle ($1\rightarrow 2 \rightarrow 3$), So we do not have a topological ordering on it. 
% Note that the envy-graph has a sink ($4$).

\subsection{Pareto-improve an EF1 Allocation}
We examined the approach of finding an EF1 allocation, which is not necessarily PO, and applying Pareto improvements to it until an EF1 and PO allocation is obtained. However, the following proposition shows that this approach is inadequate, even with two agents.

\begin{proposition}
Not every Pareto-improvement of an EF1 allocation yields an EF1 allocation, even with two agents. 
\end{proposition}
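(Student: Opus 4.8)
The statement to prove is a negative result: I only need to exhibit a single counterexample, so the plan is purely constructive rather than requiring a general argument.

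The plan is to construct an explicit two-agent instance together with an allocation $A$ that is EF1, and a Pareto-improvement $A'$ of $A$ that fails EF1. First I would fix a simple setting: a single category $C_1$ with capacity large enough that feasibility imposes no real restriction (or even the uncategorized setting where each item is its own category), so that I can focus entirely on the interplay between Pareto-improvement and EF1 without the capacity machinery getting in the way. I would choose a small number of items — probably three or four — so the utilities can be checked by hand.

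The key idea I would exploit is that EF1 is a ``global'' guarantee about the two bundles as wholes, relying on the ability to remove \emph{one} item to cancel the envy, whereas a Pareto-improvement may shuffle several items at once. So I would start from an allocation $A=(A_1,A_2)$ in which agent $1$'s envy toward agent $2$ is exactly borderline EF1 — that is, removing agent $2$'s single best good (or agent $1$'s single worst chore) just barely restores non-envy. Then I would design a Pareto-improvement $A'$ that raises both agents' utilities but redistributes items so that agent $1$'s residual envy toward agent $2$ can no longer be wiped out by removing a single item. Concretely, I would pick utilities so that in $A'$ agent $2$ holds two moderately valuable goods (from agent $1$'s perspective) instead of one very valuable good, so that removing either one alone leaves agent $1$ still envious. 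The verification then amounts to checking three things: that $A$ is EF1, that $A'$ Pareto-dominates $A$ (both $u_1$ and $u_2$ weakly increase, one strictly), and that $A'$ is not EF1 for agent $1$.

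The main obstacle is a design tension rather than a deep technical one: the move from $A$ to $A'$ must simultaneously be a genuine Pareto-improvement (so neither agent may lose utility) and must destroy EF1, and these pull in opposite directions, since making an agent better off typically makes them \emph{less} envious, not more. The trick will be to improve the \emph{envied} agent (agent $2$) enough that agent $1$'s envy grows past the one-item threshold, while also weakly improving agent $1$ so that the swap is still Pareto. I expect the cleanest way to achieve this is to use a mix of goods and chores so that handing agent $2$ items she values (improving her utility) coincides with agent $1$ valuing those same items highly too (increasing agent $1$'s envy). Once such utilities are found, the remaining calculations are routine arithmetic comparisons that I would present in a small table alongside the two allocations.
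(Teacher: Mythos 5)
There is a genuine gap: for a non-existence statement like this one, the proof \emph{is} the counterexample, and you never actually produce one. Everything you write is a plan for how you would search for an instance ("once such utilities are found, the remaining calculations are routine"), but the instance itself, together with the arithmetic verifying that $A$ is EF1, that $A'$ Pareto-dominates $A$, and that $A'$ violates EF1, is exactly the content that cannot be omitted. The paper's proof consists of an explicit instance with eight chores in one category, the allocation $A_1=\{o_1,o_5,o_6,o_7\}$, $A_2=\{o_2,o_3,o_4,o_8\}$, and the Pareto-improvement that trades $\{o_1\}$ for $\{o_2,o_3,o_4\}$; without a comparable concrete construction your argument establishes nothing.

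Beyond the missing example, the mechanism you sketch is not the one that works most cleanly, and it is not clear it can be realized at all under your constraints. You propose to make agent $2$'s bundle "flatter" in agent $1$'s eyes (two moderate goods instead of one large one), attacking the branch of EF1 that removes a good from the \emph{envied} bundle; but since the items handed to agent $2$ must come from agent $1$, and the move must weakly improve $u_1(A_1)$, these requirements fight each other, and you correctly identify this tension without resolving it. The paper's example dissolves the tension differently: it uses only chores and arranges the trade so that $u_1(A_1)$ and $u_1(A_2)$ are \emph{both unchanged} (agent $1$ swaps a chore worth $-5$ for three chores totalling $-5$), while agent $2$ strictly gains. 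The EF1 violation then comes from the other branch of the definition: agent $1$'s envy is numerically the same, but her single worst chore shrinks from $-5$ to $-2$, so removing one item from her own bundle no longer cancels the envy. If you pursue your write-up, you should either adopt this "keep the envious agent's utilities fixed but fragment her bundle" idea or demonstrate concretely that your goods-based flattening can coexist with the Pareto-improvement requirement.
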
 

\begin{proof}
Let $A = \{A_1,A_2\}$ be an EF1 allocation for two agents.
Suppose that all the items are chores.
Define a Pareto improvement as a replacement between two subsets of chores: $X_1 \subseteq A_1$ and $X_2 \subseteq A_2$ (one of the subsets may be empty), such that the change harms no one and benefits at least one agent.
In particular, $\forall i \in {1,2}$ : $u_i(X_{3-i}) \geq u_i(X_i)$ --- the agent prefers (or indifferent) what she received over what she gave.

The following example proves the proposition. Consider
% the instance $I = (N, C, S, U)$ where $N=[2], C=\{C_1\}$, $C_1=[8]$, $s_1=8$ and $U$ is expressed in the following table: 
an instance with one category with 8 chores, capacity constraint of 8, and two agents with the valuations presented in Table \ref{tab:pareto-improve}.

\begin{table}[]
    \caption{Pareto improvements counterexample}
    \label{tab:pareto-improve}
    \centering
    \begin{tabular}{||c | c c c c c c c c||} 
    \hline
    {} & {$o_1$} & {$o_2$} & {$o_3$} & {$o_4$} & {$o_5$} & {$o_6$} & {$o_7$} & {$o_8$} \\ [0.5ex] 
    \hline\hline
    Agent 1 & -5 & -2 & -1 & -2 & -2 & -2 & -1 & -2 \\
    \hline
    Agent 2 & -1 & -1 & -2 & -1 & -1 & 0 & 0 & 0 \\ 
    \hline
    \end{tabular} 
\end{table}
Suppose that the EF1 allocation, $A$, is $A_1=\{o_1,o_5,o_6,o_7\}$, $A_2=\{o_2,o_3,o_4,o_8\}$. 
The utilities of the agents in $A$ are: 
\begin{itemize}
\item $u_1(A_1) = -10, u_1(A_2) = -7$
\item $u_2(A_1) = -2, u_2(A_2) = -4$
\end{itemize}

Clearly, the two agents are jealous of each other, but their envy is up to one chore because $u_1(A_1 \diagdown \{o_1\}) \geq u_1(A_2)$, and $u_2(A_2 \diagdown \{o_3\}) \geq u_2(A_1)$. 

In addition, $A$ is not PO because there is an envy-cycle in the envy-graph.

Consider the following Pareto-improvement: $X_1=\{o_1\}, X_2=\{o_2,o_3,o_4\}$. It  does not harm agent 1 and benefits agent 2. 

After the replacement, the utilities of agent 1 do not change, that is, $u_1(A_1) = -10, u_1(A_2) = -7$. However, the most difficult chore in agent 1's bundle is worth $-2$, which is not enough for her to eliminate the envy. So the Pareto-improvement is not EF1.
\end{proof}

\section{Technical Lemmas}

\begin{lemma}
\label{lem:diffs_mul}
For any six real numbers $x_i,x_j,y_i,y_j,z_i,z_j$, the following inequalities are equivalent:
\begin{align}
\label{eq:basic}
x_i(y_j-z_j) + y_i(z_j-x_j) + z_i(x_j-y_j) \leq 0
\\
\label{eq:x}
(x_i-y_i)(x_j-z_j)
\leq
(x_i-z_i)(x_j-y_j)
\\
\label{eq:y}
(y_i-z_i)(y_j-x_j)
\leq
(y_i-x_i)(y_j-z_j)
\\
\label{eq:z}
(z_i-x_i)(z_j-y_j)
\leq
(z_i-y_i)(z_j-x_j)
\end{align}
\end{lemma}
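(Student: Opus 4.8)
The plan is to show that each of the three product inequalities \eqref{eq:x}, \eqref{eq:y}, \eqref{eq:z}, once rewritten in the form ``(left side) $-$ (right side) $\leq 0$'', expands to \emph{exactly} the polynomial on the left of \eqref{eq:basic}. Since all four then assert one and the same scalar inequality, they are pairwise equivalent; in particular each is equivalent to \eqref{eq:basic}. The whole argument is a single round of algebraic expansion carried out three times.

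First I would expand the target expression
\[
E := x_i(y_j-z_j) + y_i(z_j-x_j) + z_i(x_j-y_j)
   = x_i y_j - x_i z_j + y_i z_j - y_i x_j + z_i x_j - z_i y_j .
\]
This is a sum of six ``mixed'' monomials, each the product of one $i$-variable and one $j$-variable, split into three positively-signed terms $x_i y_j,\, y_i z_j,\, z_i x_j$ and three negatively-signed terms $x_i z_j,\, y_i x_j,\, z_i y_j$. Inequality \eqref{eq:basic} is precisely $E \leq 0$, so it suffices to reduce each of \eqref{eq:x}, \eqref{eq:y}, \eqref{eq:z} to this.

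Next, for \eqref{eq:x} I would expand the two products $(x_i-y_i)(x_j-z_j)$ and $(x_i-z_i)(x_j-y_j)$. Both contain the square term $x_i x_j$, which cancels upon subtraction, and the four remaining monomials of each product recombine to reproduce the six terms of $E$; hence \eqref{eq:x} $\iff E \leq 0$. I would then repeat the identical computation for \eqref{eq:y}, where the cancelling square term is $y_i y_j$, and for \eqref{eq:z}, where $z_i z_j$ cancels, obtaining $E$ on the nose in each case. This establishes all four equivalences simultaneously.

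There is no genuine mathematical obstacle here; the only thing requiring care is sign tracking during the cancellations, so as to confirm that the residual six monomials reproduce $E$ (rather than $-E$) with the correct orientation in all three reductions. One may note, as a sanity check and conceptual shortcut, that $E$ is totally antisymmetric in the three points $(x_i,x_j)$, $(y_i,y_j)$, $(z_i,z_j)$ — it is, up to sign, the $3\times 3$ determinant with rows $(x_i,x_j,1)$, $(y_i,y_j,1)$, $(z_i,z_j,1)$ — and that \eqref{eq:x}, \eqref{eq:y}, \eqref{eq:z} are merely $E \leq 0$ grouped around the three different vertices; but for the written proof the direct threefold expansion is the most transparent route.
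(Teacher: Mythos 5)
Your proposal is correct and takes essentially the same approach as the paper: both proofs reduce each of \eqref{eq:x}, \eqref{eq:y}, \eqref{eq:z} to the six-monomial expression in \eqref{eq:basic} by expanding and cancelling the single square term ($x_ix_j$, $y_iy_j$, or $z_iz_j$ respectively), the only difference being that the paper runs the chain of equivalences from \eqref{eq:basic} toward \eqref{eq:x} while you expand in the opposite direction. Your determinant/antisymmetry remark is a nice sanity check but does not change the substance of the argument.
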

\begin{proof}
By adding $+x_i x_j - x_i x_j$ to  inequality \eqref{eq:basic}, we get:
\begin{align*}
&
\eqref{eq:basic}
\\
\iff&
x_i(y_j-z_j+x_j-x_j) + y_i(z_j-x_j) + z_i(x_j-y_j) \leq 0
\\
\iff&
x_i((x_j-z_j)-(x_j-y_j)) 
-
y_i(x_j-z_j) 
+
z_i(x_j-y_j) 
\leq 0
\\
\iff&
(x_i-y_i)(x_j-z_j)
-
(x_i-z_i)(x_j-y_j)
\leq 0
\\
\iff&
(x_i-y_i)(x_j-z_j)
\leq
(x_i-z_i)(x_j-y_j)
\\
\iff& \eqref{eq:x}.
\end{align*}

Inequalities \eqref{eq:y}, \eqref{eq:z} are entirely analogous.
\end{proof}

\begin{lemma}
\label{lem:diffs_div}
If $x_i > y_i > z_i$, then the following are equivalent:

\begin{align}
\frac{x_j-z_j}{x_i-z_i}  
\leq
\frac{x_j-y_j}{x_i-y_i}  
&& r_{j/i}(x,z) \leq r_{j/i}(x,y)
\\
\frac{y_j-z_j}{y_i-z_i}
\leq
\frac{y_j-x_j}{y_i-x_i}
&& r_{j/i}(y,z) \leq r_{j/i}(x,y)
\\
\frac{z_j-y_j}{z_i-y_i}
\leq
\frac{z_j-x_j}{z_i-x_i}
&& r_{j/i}(y,z) \leq r_{j/i}(x,z)
\end{align}
\end{lemma}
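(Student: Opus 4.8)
The plan is to reduce Lemma~\ref{lem:diffs_div} to Lemma~\ref{lem:diffs_mul}, which is stated immediately above and already establishes the equivalence of the cross-multiplied inequalities \eqref{eq:x}, \eqref{eq:y} and \eqref{eq:z}. The key observation is that, under the hypothesis $x_i > y_i > z_i$, every denominator appearing in Lemma~\ref{lem:diffs_div} is nonzero and its sign is known: $x_i - y_i > 0$, $y_i - z_i > 0$ and $x_i - z_i > 0$, whereas $y_i - x_i < 0$, $z_i - y_i < 0$ and $z_i - x_i < 0$. Since multiplying an inequality through by a nonzero quantity is a reversible operation, clearing denominators in each of the three displayed inequalities produces an \emph{if-and-only-if} statement, and I claim that the three resulting inequalities are exactly \eqref{eq:x}, \eqref{eq:y} and \eqref{eq:z}.

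First I would treat the top inequality: multiplying $\tfrac{x_j-z_j}{x_i-z_i}\le \tfrac{x_j-y_j}{x_i-y_i}$ by the product $(x_i-z_i)(x_i-y_i)$, which is positive, leaves the direction unchanged and yields exactly \eqref{eq:x}. For the middle inequality the denominators $y_i-z_i$ and $y_i-x_i$ have opposite signs, so their product is negative; clearing them therefore reverses the inequality once, and after rearranging the result is precisely \eqref{eq:y}. For the bottom inequality both denominators $z_i-y_i$ and $z_i-x_i$ are negative, so their product is positive; clearing them preserves the direction, and the result is exactly \eqref{eq:z}.

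Having matched the three statements of Lemma~\ref{lem:diffs_div} with \eqref{eq:x}, \eqref{eq:y} and \eqref{eq:z} respectively, I would then invoke Lemma~\ref{lem:diffs_mul}, by which these three are equivalent (each being equivalent to \eqref{eq:basic}). Chaining the three biconditionals obtained by clearing denominators with this equivalence gives at once the desired equivalence of the three forms in Lemma~\ref{lem:diffs_div}.

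The only delicate point -- and the place a careless argument would go wrong -- is the bookkeeping of signs when clearing denominators: whether the product of the two denominators is positive or negative is what decides whether the inequality is preserved or reversed. Because the hypothesis $x_i>y_i>z_i$ pins down every sign, this is routine, but it must be carried out line by line. As a sanity check on the statement itself, the same fact can be read off geometrically by viewing the three ratios as the slopes of the sides of the triangle with vertices $(x_i,x_j)$, $(y_i,y_j)$, $(z_i,z_j)$: the slope $\tfrac{x_j-z_j}{x_i-z_i}$ of the side spanning the extreme abscissae is a convex combination of the other two slopes and hence lies between them, so all three pairwise slope comparisons collapse to the single comparison of $r_{j/i}(y,z)$ with $r_{j/i}(x,y)$.
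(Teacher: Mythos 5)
Your proposal is correct and takes essentially the same route as the paper: the paper's one-line proof also passes between each ratio inequality and the corresponding product inequality \eqref{eq:x}, \eqref{eq:y}, \eqref{eq:z} by dividing/multiplying by the two $_i$-subscript differences, and then relies on Lemma \ref{lem:diffs_mul} for their equivalence. Your explicit sign bookkeeping (noting which denominator products are positive and which negative under $x_i>y_i>z_i$) is the step the paper leaves implicit, and you carry it out correctly.
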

\begin{proof}
For each $k$ in \{\ref{eq:x},\ref{eq:y},\ref{eq:z}\},
divide inequality $(k)$ by the two terms with the ``$_i$'' subscript to get inequality $(k+3)$.
\end{proof}

\begin{observation}
\label{obs}
Lemma \ref{lem:diffs_mul} is still true if we reverse all inequalities directions, so too Lemma \ref{lem:diffs_div}.

That is,
if $x_i > y_i > z_i$, 
then the following are equivalent:

\begin{align}
\frac{x_j-z_j}{x_i-z_i}  
\geq
\frac{x_j-y_j}{x_i-y_i}  
&& r_{j/i}(x,z) \geq r_{j/i}(x,y)
\\
\frac{y_j-z_j}{y_i-z_i}
\geq
\frac{y_j-x_j}{y_i-x_i}
&& r_{j/i}(y,z) \geq r_{j/i}(x,y)
\\
\frac{z_j-y_j}{z_i-y_i}
\geq
\frac{z_j-x_j}{z_i-x_i}
&& r_{j/i}(y,z) \geq r_{j/i}(x,z)
\end{align}
\end{observation}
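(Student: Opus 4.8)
The plan is to deduce the reversed statements directly from Lemma~\ref{lem:diffs_mul} and Lemma~\ref{lem:diffs_div} by a sign-flipping substitution, rather than by redoing the algebra. The key observation is that each of the four inequalities \eqref{eq:basic}, \eqref{eq:x}, \eqref{eq:y}, \eqref{eq:z} is homogeneous of degree one in the three ``$_j$''-variables $x_j, y_j, z_j$: the left-hand side of \eqref{eq:basic} is a linear form in $(x_j,y_j,z_j)$ with no constant term, and each side of \eqref{eq:x}--\eqref{eq:z} is a product of a difference of ``$_i$''-variables with a difference of ``$_j$''-variables, hence again linear in $(x_j,y_j,z_j)$.

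First I would apply Lemma~\ref{lem:diffs_mul} to the six numbers $x_i, y_i, z_i$ and $-x_j, -y_j, -z_j$, keeping the $i$-variables fixed and negating the $j$-variables. By the homogeneity just noted, this substitution multiplies both sides of each of the four inequalities by $-1$, which flips every ``$\leq$'' into a ``$\geq$''. Consequently the equivalence of the four $\leq$-inequalities asserted by Lemma~\ref{lem:diffs_mul} becomes precisely the equivalence of the four $\geq$-inequalities, establishing the reversed form of Lemma~\ref{lem:diffs_mul}.

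For the reversed form of Lemma~\ref{lem:diffs_div}, I would follow the same route used to derive Lemma~\ref{lem:diffs_div} from Lemma~\ref{lem:diffs_mul}: under the hypothesis $x_i > y_i > z_i$, all the ``$_i$''-differences appearing as denominators are strictly positive, so dividing each reversed inequality \eqref{eq:x}--\eqref{eq:z} by its two ``$_i$''-subscripted factors preserves the ``$\geq$'' direction and yields exactly the three displayed ratio inequalities. Equivalently, negating the $j$-variables negates each ratio $r_{j/i}$ in the sense of Definition~\ref{diff_ratio_definition}, turning every ``$\leq$'' into ``$\geq$''. Since the reversed multiplicative inequalities are already equivalent by the previous paragraph, their ratio forms are equivalent as well.

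There is no genuine obstacle here; the statement is a symmetry remark. The only point that must be checked with care is the homogeneity of each inequality in the $j$-variables, which is what guarantees that the single substitution $x_j,y_j,z_j \mapsto -x_j,-y_j,-z_j$ flips all four inequalities simultaneously and uniformly. This is precisely what lets us reuse the two lemmas verbatim instead of reproving them.
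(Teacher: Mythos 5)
Your proposal is correct, and it supplies an actual derivation where the paper gives none: the Observation is stated without proof, the implicit justification being that the proofs of Lemmas \ref{lem:diffs_mul} and \ref{lem:diffs_div} are chains of equivalences that never use the direction of the inequality, so one can simply rerun them with every ``$\leq$'' replaced by ``$\geq$''. Your sign-flip substitution $x_j,y_j,z_j \mapsto -x_j,-y_j,-z_j$ achieves the same end while reusing the two lemmas as black boxes, and the homogeneity observation that makes it work is exactly right; this is arguably cleaner than re-deriving the algebra. One slip to fix: under $x_i>y_i>z_i$ it is \emph{not} true that all the ``$_i$''-differences appearing as denominators are positive --- $y_i-x_i$, $z_i-y_i$ and $z_i-x_i$ are all negative --- so ``dividing by the two $_i$-subscripted factors preserves the direction'' is not a uniform justification (for the second line the product of the two denominators is negative and the division flips the sign, which happens to be what is needed; for the third line the product is positive). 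This does not damage the proof, because the alternative argument you give in the same paragraph is sound and self-contained: negating the $j$-variables negates the numerator, hence the value, of every ratio $r_{j/i}$ in Definition \ref{diff_ratio_definition}, so the three ``$\leq$'' statements of Lemma \ref{lem:diffs_div} applied to $(x_i,y_i,z_i,-x_j,-y_j,-z_j)$ are verbatim the three ``$\geq$'' statements for the original data. I would delete the positivity claim and keep only that second justification.
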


\section{A Complete Proof for Lemma \ref{lem:diffs}}
\label{app:lem:diffs}

\begin{algorithm}[tb]
\caption{Transforming one allocation to another one using edge-disjoint exchange-cycles}
\label{alg:find-exchange-cycles}
% \hspace*{\algorithmicindent} 
\textbf{Input:} Two feasible allocations $A=(A_1,...,A_n), A'=(A_1',...,A_n')$ \\
\hspace*{\algorithmicindent} \textbf{Output:} A sequence of exchange-cycles leading from $A$ to $A'$.
\begin{algorithmic}[1]
\State \label{state:init}result $\gets$ an empty sequence.
\State \label{state:first-if}$G\gets$ an empty graph, in which the nodes are the $n$ agents.
\State Choose an item $o_i$ such that $o_i\in A_i, o_i\in A_j'$, for some $i,j\in [n]$
\If {there is no such item}
\State \Return result
\EndIf
\State Add a directed edge $(i,j)$ to $G$ with name $o_i$.
\Statex // Since $j$ gets a new item $(o_i)$, and in every feasible allocation there is the same number of items from each category, $j$ must give someone an item from the same category.
\State \label{state:first-else}$o_j\gets$ the item from $o_i$'s category such that $o_j\in A_j, o_j\in A_k'$, for some $k\in [n]$
\State Add a directed edge $(j,k)$ to $G$ with name $o_j$.
\If {$G$ has a cycle ($k==i$)}
\State Append the cycle to result.
\State $A \gets$ the allocation after exchanging items in $A$ according to this cycle.
\State Go to \ref{state:first-if}
\Else
\State $j\gets k$
\State Go to \ref{state:first-else}
\EndIf
\end{algorithmic}
\end{algorithm}

\begin{customthm}{\ref{lem:diffs}}
For any $n$ agents,
for any $w=(w_1,w_2,...,w_n)$ such that $w_1,w_2,...,w_n\in(0,1)$, and an allocation $A=(A_1,...,A_n)$, the following are equivalent:

(i) $A$ is $w$-maximal.

% iff

(ii) Every exchange-cycle does not increase the weighted sum of utilities.
That is, for all $x\geq 2$, a subset of agents $\{a_1,...,a_x\}\in [n]$, and a set of items $o_1,...,o_x$, such that all are in the same category, and $\forall j\in [x], o_j\in A_{a_j}$:
\begin{align*}
    w_{a_1} u_{a_1} (o_1) + w_{a_2} u_{a_2} (o_2) + ... + w_{a_x} u_{a_x} (o_x) \geq \\
    w_{a_1} u_{a_1} (o_x) + w_{a_2} u_{a_2} (o_1) + ... + w_{a_x} u_{a_x} (o_{x-1})
\end{align*}
\end{customthm}
\begin{proof}~

(i) $\implies$ (ii):
Suppose $A$ is a $w$-maximal allocation. Suppose toward a contradiction that (ii) is not true, that is, there exists a set of indices $\{a_1,...,a_x\}\in [n]$ and a set of items in the same category $o_1,...,o_x$, such that:
\begin{align*}
    w_{a_1} u_{a_1} (o_1) + w_{a_2} u_{a_2} (o_2) + ... + w_{a_x} u_{a_x} (o_x) < \\
    w_{a_1} u_{a_1} (o_x) + w_{a_2} u_{a_2} (o_1) + ... + w_{a_x} u_{a_x} (o_{x-1})
\end{align*}
So we can switch those items in a cycle, and get another feasible allocation with a weighted sum greater than $A$. Thus $A$ is not $w$-maximal. Contradiction.

(ii) $\implies$ (i): 
Denote by $A'=(A_1',...,A_n')$ some feasible allocation, different from $A$.
We claim that it is possible to transform $A$ to $A'$ using a sequence of pairwise-disjoint exchange-cycles.
One way to find these exchange-cycles is presented in Algorithm \ref{alg:find-exchange-cycles}. 
Note that, each cycle in the sequence places the items involved in the exchange exactly where they should be according to the $A'$ allocation. Therefore, any item involved in one cycle, cannot be involved in any other cycle, that is, the cycles have no items in common.
Therefore, all cycles found by Algorithm \ref{alg:find-exchange-cycles} exist in allocation $A$, so according to the assumption in (ii), each of them does not increase the weighted sum of the allocation.
Therefore, $w_1u_1(A_1')+...+w_nu_n(A_n') \leq w_1u_1(A_1)+...+w_nu_n(A_n)$.

This holds for every feasible allocation $A'$, so $A$ is $w$-maximal by definition. 
\end{proof}

\section{A Complete Proof for Lemma \ref{lem:exchange}}
\label{app:lem:exchange}

\begin{customthm}{\ref{lem:exchange}}
Suppose there are $n=2$ agents.
Let $A$ be a $w$-maximal allocation, for $w=(w_1,w_2)$. Suppose there is an exchangeable pair $o_1\in A_1, o_2\in A_2$ such that:
\begin{enumerate}
\item $u_2(o_1)>u_2(o_2)$, that is, $o_1$ is the preferred item.
\item Among all exchangeable pairs with a preferred item, this pair has the largest difference-ratio $r_{2/1}(o_1,o_2)$.
\end{enumerate}
Let $A'$ be the allocation resulting from exchanging $o_1$ and $o_2$ in $A$.
Then, $A'$ is $w'$-maximal for some $w' = (w_1',w_2')$ with  $w_1'\leq w_1, w_2'\geq w_2, w_1'\in(0,1), w_2'\in(0,1)$.
\end{customthm}
\begin{proof}
By Lemma \ref{cases-lemma}, $u_1(o_1) > u_1(o_2)$, because $A$ is $w$-maximal and $(o_1,o_2)$ is an exchangeable pair for which $u_2(o_1)>u_2(o_2)$.
Then, by definition, $r_{2/1}(o_1,o_2) > 0$, and by lemma \ref{lem:diffs2}(ii), $\frac{w_1}{w_2} \geq r_{2/1}(o_1,o_2)$.

Consider the allocation $A'=(A_1',A_2')$, resulting from exchanging $o_1$ and $o_2$. Consider some $w_1',w_2'\in (0,1)$ such that $\frac{w_1'}{w_2'} = r_{2/1}(o_1,o_2)$, so $0 < \frac{w_1'}{w_2'} \leq \frac{w_1}{w_2}$.

We now look at all the exchangeable pairs $(o_1^*,o_2^*)$ after the exchange, and see that they satisfy all the conditions of Lemma \ref{lem:diffs2}(ii) with $w'=(w_1',w_2')$, which can be written as:
\begin{enumerate}[(a)] % (a), (b), (c), ...
\item $u_1(o_1^*) > u_1(o_2^*)$ and 
    $r_{2/1}(o_1,o_2) \geq r_{2/1}(o_1^*,o_2^*)$ or
\item $u_1(o_1^*) = u_1(o_2^*)$ and
    $u_2(o_2^*)\geq u_2(o_1^*)$ or
\item $u_1(o_1) < u_1(o_2)$ and
    $r_{2/1}(o_1,o_2) \leq r_{2/1}(o_1^*,o_2^*)$
\end{enumerate}

\begin{enumerate}
    \item The exchangeable pairs $(o_1^*,o_2^*)$ who have not moved: 
Lemma \ref{lem:diffs2} implies that:
\begin{itemize}
\item If $u_1(o_1^*) > u_1(o_2^*)$, then $\frac{w_1}{w_2} \geq r_{2/1}(o_1^*,o_2^*)$. If also $u_2(o_1^*) > u_2(o_2^*)$, then the maximality of $r_{2/1}(o_1,o_2)$ [condition \ref{max-cond}] implies $\frac{w_1}{w_2} \geq r_{2/1}(o_1,o_2) \geq r_{2/1}(o_1^*,o_2^*)$. 
Else, by Definition \ref{diff_ratio_definition}, $r_{2/1}(o_1^*,o_2^*) \leq 0$, so $r_{2/1}(o_1,o_2) \geq 0 \geq r_{2/1}(o_1^*,o_2^*)$ holds again.
\item If $u_1(o_1^*) = u_1(o_2^*)$, then $u_2(o_2^*) \geq u_2(o_1^*)$.
\item If $u_1(o_1^*) < u_1(o_2^*)$, then $\frac{w_1}{w_2} \leq r_{2/1}(o_1^*,o_2^*)$. In particular $r_{2/1}(o_1,o_2) \leq r_{2/1}(o_1^*,o_2^*)$.
\end{itemize} 
After the exchange, they still satisfy the same conditions.

    \item The pair $(o_2,o_1)$:

    Now $o_2\in A_1', o_1\in A_2'$, and the pair $(o_2,o_1)$ fits condition (c), which says that the item in 1's bundle worth less, for agent 1, than the item in 2's bundle ($u_1(o_2) < u_1(o_1)$).
    For it $\frac{w_1'}{w_2'} = r_{2/1}(o_1,o_2) = r_{2/1}(o_2,o_1)$, by definition and by symmetry of $r$.

    \item Pairs in the form $(o_1^*,o_1)$, $o_1^*\in A_1', o_1^* \neq o_2$:
    \begin{enumerate}[(a)] 
    
    \item If $u_1(o_1^*)>u_1(o_1)$, of course $u_1(o_1^*)>u_1(o_1)>u_1(o_2)$.
        If also $u_2(o_1^*)>u_2(o_1)$ then because of the maximality condition, 
        $r_{2/1}(o_1,o_2)\geq r_{2/1}(o_1^*,o_2)$.
        By \ref{obs} (with $x=o_1^*, y=o_1, z=o_2$), this is equivalent to 
        $r_{2/1}(o_1,o_2)\geq r_{2/1}(o_1^*,o_1)$.
        And if $u_2(o_1^*)\leq u_2(o_1)$, then $r_{2/1}(o_1^*,o_1) \leq 0 < r_{2/1}(o_1,o_2)$.
    
\item If $u_1(o_1^*)=u_1(o_1)$,
it is not possible that $u_2(o_1)<u_2(o_1^*)$ because it implies $u_2(o_2)<u_2(o_1^*)$, and by \ref{cases-lemma}, $u_1(o_2)<u_1(o_1^*)$.
By the values of r's numerators and denominators, we get $r_{2/1}(o_1^*,o_2) > r_{2/1}(o_1,o_2)$, and it contradicts $r_{2/1}(o_1,o_2)$ maximality. 
Therefore, $u_2(o_1)\geq u_2(o_1^*)$.
    
\item If $u_1(o_1^*)<u_1(o_1)$,
it is also not possible that $u_2(o_1) < u_2(o_1^*)$, as explained in (b). It is also not 
possible that $u_2(o_1) = u_2(o_1^*)$,
because then $u_2(o_1^*) > u_2(o_2)$, and by \ref{cases-lemma}, since $A$ is an $(w_1,w_2)$-maximal and $(o_1^*,o_2)$ is an exchangeable pair in it, $u_1(o_1^*)>u_1(o_2)$. Therefore, $r_{2/1}(o_1^*,o_2) > r_{2/1}(o_1,o_2)$, contradiction. 
    
So in that case, necessarily $u_2(o_1^*)<u_2(o_1)$. 
        Based on that, we now show that $r_{2/1}(o_1,o_2) \leq r_{2/1}(o_1^*,o_1)$.
    
        If also $u_2(o_1^*) > u_2(o_2)$, by $r_{2/1}(o_1,o_2)$ maximality, $r_{2/1}(o_1^*,o_2) \leq r_{2/1}(o_1,o_2)$, 
        and by \ref{cases-lemma}, $u_1(o_1) > u_1(o_1^*) > u_1(o_2)$.
        Then, by \ref{lem:diffs_div} (with $x=o_1,y=o_1^*,z=o_2$), $r_{2/1}(o_1,o_2) \leq r_{2/1}(o_1^*,o_1)$.
        
        Else, $u_2(o_1^*) \leq u_2(o_2)$. Since $u_1(o_2)<u_1(o_1)$ and $u_1(o_1^*)<u_1(o_1)$, there are two options:
\begin{itemize}
\item $u_1(o_1^*)<u_1(o_2)<u_1(o_1)$.

By Lemma \ref{lem:diffs2} we know that $r_{2/1}(o_1,o_2) \leq \frac{w_1}{w_2}, r_{2/1}(o_1^*,o_2) \geq \frac{w_1}{w_2}$, so $r_{2/1}(o_1,o_2) \leq r_{2/1}(o_1^*,o_2)$, and by \ref{obs} (with $x=o_1,y=o_2,z=o_1^*$), $r_{2/1}(o_1,o_2) \leq r_{2/1}(o_1^*,o_1)$.
\item 
$u_1(o_2)\leq u_1(o_1^*)<u_1(o_1)$.

Then, 
$u_2(o_1)-u_2(o_2) \leq u_2(o_1)-u_2(o_1^*)$
and 
$u_1(o_1)-u_1(o_2) \geq u_1(o_1)-u_1(o_1^*)$.
So
$r_{2/1}(o_1,o_2)
= 
\frac{u_2(o_1)-u_2(o_2)}{u_1(o_1)-u_1(o_2)}
\leq 
\frac{u_2(o_1)-u_2(o_1^*)}{u_1(o_1)-u_1(o_1^*)}
=
r_{2/1}(o_1^*,o_1)$.
\end{itemize}
\end{enumerate}
    
    \item Pairs in the form $(o_2,o_2^*)$, $o_2^*\in A_2', o_2^* \neq o_1$:
    
    Exactly the same arguments in case 3 (but with $o_2,o_2^*$), prove this case.
\end{enumerate}
Therefore, by Lemma \ref{lem:diffs2} (ii), $A'$ is $(w_1',w_2')$-maximal, and  $\frac{w_1'}{w_2'} \leq \frac{w_1}{w_2}$ with $w_1'+w_2'=1$ implies $w_1'\leq w_1$ and $w_2' \geq w_2$.
\end{proof}

\end{document}